    \crefname{figure}{Figure}{figures}
\newtheorem{lemma}{Lemma}
\newtheorem{proposition}{Proposition}
\newtheorem{corol}{Corollary}
\newtheorem{theorem}{Theorem}
\newtheorem{definition}{Definition}
\newtheorem{example}{Example}
\def\be{\begin{equation}}
\def\ee{\end{equation}}
\def\ba{\begin{eqnarray}}
\def\ea{\end{eqnarray}}
\def\a{{\cal A}}
\def\g{{\cal G}}
\def\Nl{{\mathchoice
{\setbox0=\hbox{$\displaystyle\rm N$}\hbox{\hbox to0pt
{\kern0.4\wd0\vrule height0.9\ht0\hss}\box0}}
{\setbox0=\hbox{$\textstyle\rm N$}\hbox{\hbox to0pt
{\kern0.4\wd0\vrule height0.9\ht0\hss}\box0}}
{\setbox0=\hbox{$\scriptstyle\rm N$}\hbox{\hbox to0pt
{\kern0.4\wd0\vrule height0.9\ht0\hss}\box0}}
{\setbox0=\hbox{$\scriptscriptstyle\rm N$}\hbox{\hbox to0pt
{\kern0.4\wd0\vrule height0.9\ht0\hss}\box0}}}}
\def\Zl{{\mathchoice
{\setbox0=\hbox{$\displaystyle\rm Z$}\hbox{\hbox to0pt
{\kern0.4\wd0\vrule height0.9\ht0\hss}\box0}}
{\setbox0=\hbox{$\textstyle\rm Z$}\hbox{\hbox to0pt
{\kern0.4\wd0\vrule height0.9\ht0\hss}\box0}}
{\setbox0=\hbox{$\scriptstyle\rm Z$}\hbox{\hbox to0pt
{\kern0.4\wd0\vrule height0.9\ht0\hss}\box0}}
{\setbox0=\hbox{$\scriptscriptstyle\rm Z$}\hbox{\hbox to0pt
{\kern0.4\wd0\vrule height0.9\ht0\hss}\box0}}}}
\def\Ql{{\mathchoice
{\setbox0=\hbox{$\displaystyle\rm Q$}\hbox{\hbox to0pt
{\kern0.4\wd0\vrule height0.9\ht0\hss}\box0}}
{\setbox0=\hbox{$\textstyle\rm Q$}\hbox{\hbox to0pt
{\kern0.4\wd0\vrule height0.9\ht0\hss}\box0}}
{\setbox0=\hbox{$\scriptstyle\rm Q$}\hbox{\hbox to0pt
{\kern0.4\wd0\vrule height0.9\ht0\hss}\box0}}
{\setbox0=\hbox{$\scriptscriptstyle\rm Q$}\hbox{\hbox to0pt
{\kern0.4\wd0\vrule height0.9\ht0\hss}\box0}}}}
\def\Rl{{\mathchoice
{\setbox0=\hbox{$\displaystyle\rm R$}\hbox{\hbox to0pt
{\kern0.4\wd0\vrule height0.9\ht0\hss}\box0}}
{\setbox0=\hbox{$\textstyle\rm R$}\hbox{\hbox to0pt
{\kern0.4\wd0\vrule height0.9\ht0\hss}\box0}}
{\setbox0=\hbox{$\scriptstyle\rm R$}\hbox{\hbox to0pt
{\kern0.4\wd0\vrule height0.9\ht0\hss}\box0}}
{\setbox0=\hbox{$\scriptscriptstyle\rm R$}\hbox{\hbox to0pt
{\kern0.4\wd0\vrule height0.9\ht0\hss}\box0}}}}
\def\Cl{{\mathchoice
{\setbox0=\hbox{$\displaystyle\rm C$}\hbox{\hbox to0pt
{\kern0.4\wd0\vrule height0.9\ht0\hss}\box0}}
{\setbox0=\hbox{$\textstyle\rm C$}\hbox{\hbox to0pt
{\kern0.4\wd0\vrule height0.9\ht0\hss}\box0}}
{\setbox0=\hbox{$\scriptstyle\rm C$}\hbox{\hbox to0pt
{\kern0.4\wd0\vrule height0.9\ht0\hss}\box0}}
{\setbox0=\hbox{$\scriptscriptstyle\rm C$}\hbox{\hbox to0pt
{\kern0.4\wd0\vrule height0.9\ht0\hss}\box0}}}}
\def\Hl{{\mathchoice
{\setbox0=\hbox{$\displaystyle\rm H$}\hbox{\hbox to0pt
{\kern0.4\wd0\vrule height0.9\ht0\hss}\box0}}
{\setbox0=\hbox{$\textstyle\rm H$}\hbox{\hbox to0pt
{\kern0.4\wd0\vrule height0.9\ht0\hss}\box0}}
{\setbox0=\hbox{$\scriptstyle\rm H$}\hbox{\hbox to0pt
{\kern0.4\wd0\vrule height0.9\ht0\hss}\box0}}
{\setbox0=\hbox{$\scriptscriptstyle\rm H$}\hbox{\hbox to0pt
{\kern0.4\wd0\vrule height0.9\ht0\hss}\box0}}}}
\def\Ol{{\mathchoice
{\setbox0=\hbox{$\displaystyle\rm O$}\hbox{\hbox to0pt
{\kern0.4\wd0\vrule height0.9\ht0\hss}\box0}}
{\setbox0=\hbox{$\textstyle\rm O$}\hbox{\hbox to0pt
{\kern0.4\wd0\vrule height0.9\ht0\hss}\box0}}
{\setbox0=\hbox{$\scriptstyle\rm O$}\hbox{\hbox to0pt
{\kern0.4\wd0\vrule height0.9\ht0\hss}\box0}}
{\setbox0=\hbox{$\scriptscriptstyle\rm O$}\hbox{\hbox to0pt
{\kern0.4\wd0\vrule height0.9\ht0\hss}\box0}}}}
\newcommand{\ca}{\mathcal A}
\newcommand{\cb}{\mathcal B}
\newcommand{\cc}{\mathcal C}
\newcommand{\cd}{\mathcal D}
\newcommand{\cg}{\mathcal G}
\newcommand{\ch}{\mathcal H}
\newcommand{\ci}{\mathcal I}
\newcommand{\cj}{\mathcal J}
\newcommand{\ck}{\mathcal K}
\newcommand{\calp}{\mathcal P}
\newcommand{\cs}{\mathcal S}
\newcommand{\eqa}{\begin{eqnarray}}
\newcommand{\neqa}{\end{eqnarray}}
\def\om{\omega}
\definecolor{myblue}{rgb}{0.2,0.2,0.8}
\def\C{{\mathbbm C}}
\newcommand{\ketbra}[2] {
	| #1 \rangle \! \langle #2 |}
\def\I{{\mathbf 1}}
\def\A{{\sf A}}
\def\B{{\sf B}}
\def\C{{\sf C}}
\definecolor{darkgreen}{rgb}{0.0, 0.5, 0.13}
\begin{document}

\date{\today}

\title{Gravity-mediated entanglement via infinite-dimensional systems}

\author{Stefan L. Ludescher}
\email{Stefan.Ludescher@oeaw.ac.at}
\affiliation{Institute for Quantum Optics and Quantum Information,
Austrian Academy of Sciences, Boltzmanngasse 3, A-1090 Vienna, Austria}
\affiliation{Vienna Center for Quantum Science and Technology (VCQ),
Faculty of Physics, University of Vienna, Vienna, Austria}

\author{Leon D. Loveridge}
\affiliation{Department of Science and Industry Systems, University of South-Eastern
Norway, Kongsberg, 3616, Norway}

\author{Thomas D. Galley}
\affiliation{Institute for Quantum Optics and Quantum Information,
Austrian Academy of Sciences, Boltzmanngasse 3, A-1090 Vienna, Austria}
\affiliation{Vienna Center for Quantum Science and Technology (VCQ),
Faculty of Physics, University of Vienna, Vienna, Austria}

\author{Markus P. M\"uller}
\affiliation{Institute for Quantum Optics and Quantum Information,
Austrian Academy of Sciences, Boltzmanngasse 3, A-1090 Vienna, Austria}
\affiliation{Vienna Center for Quantum Science and Technology (VCQ),
Faculty of Physics, University of Vienna, Vienna, Austria}
\affiliation{Perimeter Institute for Theoretical Physics,
31 Caroline Street North, Waterloo, Ontario N2L 2Y5, Canada}

\begin{abstract}
There has been a wave of recent interest in detecting the quantum nature of gravity with tabletop experiments that witness gravitationally mediated entanglement. Central to these proposals is the assumption that any mediator capable of generating entanglement must itself be nonclassical. However, previous arguments for this have modelled classical mediators as finite, discrete systems such as bits, which excludes physically relevant continuous and infinite-dimensional systems such as those of classical mechanics and field theory. In this work, we close this gap by modelling classical systems as commutative unital $C^*$-algebras, arguably encompassing all potentially physically relevant classical systems.
We show that these systems cannot mediate entanglement between two quantum systems $A$ and $B$, even if $A$ and $B$ are themselves infinite-dimensional or described by arbitrary unital $C^*$-algebras (as in quantum field theory), composed with an arbitrary $C^*$-tensor product. This result reinforces the conclusion that the observation of gravity-induced entanglement would require the gravitational field to possess inherently non-classical features.
\end{abstract}

\maketitle

\section{Introduction}

Proposals for tabletop experiments that witness the non-classical nature of gravity have recently attracted significant interest~\cite{schneider2022,bose2025massive}. These proposed experiments take place in low-energy regimes and therefore cannot determine which, if any, specific quantum gravity theory is correct, however they can rule out certain classical theories of gravity. In this paper, we will focus on gravity-mediated entanglement experiments~\cite{bose2017,marletto2017}, which seek to rule out classical theories of gravity based on the observation of entanglement between two masses induced by gravitational interaction.

In these experimental proposals, two initially unentangled masses interact solely via the gravitational field. When describing this interaction using the Newtonian potential~\cite{bose2017,marletto2017}, it has been shown that entanglement is generated over time. Detecting entanglement would rule out certain models of semi-classical gravity~\cite{schneider2022}. The most prominent approach to semi-classical gravity is Møller-Rosenfeld semi-classical gravity~\cite{moller1962theories,rosenfeld1963quantization} which is based on the semi-classical Einstein equation and leads to a non-linear Schr{\" o}dinger evolution of the quantum system~\cite{bahrami2014schrodinger} (see also~\cite{anastopoulos2014problems}). This model is not operationally consistent~\cite{mielnik1980mobility,gisin1989stochastic,gisin1990weinberg}; however, there exist consistent models of classical gravity interacting with quantum matter~\cite{oppenheim2023postquantum,tilloy2024}, though some of these, such as the Diosi-Penrose model~\cite{diosi1989models,penrose1996gravity}, predict entanglement mediation via a non-local mechanism~\cite{trillo2025diosi} and hence would not be ruled out by the observation of GME.

A key assumption in the inference of the non-classical nature of the gravitational field from the observation of GME is that the gravitational field acts locally. However, it has been demonstrated that, in the low-energy approximation, the creation of entanglement can be traced back to the Newtonian potential part of the Hamiltonian rather than the dynamical part associated with gravitons~\cite{schneider2022, fragkos2022}. In this approximation, the Newtonian potential acts non-locally on the two masses, in contradiction with the assumption that the mediator (i.e., gravity) acts locally on the two masses (see also~\cite{hall2021comment,martinmartinez2023gravity,marchese2025newton} for further criticisms of the assumptions of the GME experiments). Nevertheless, it has been argued that, based on other experiments, we already know that gravity itself is a field that acts locally, and therefore, modelling the influence of the gravitational field as non-local is merely an artifact of the approximation~\cite{christodoulou2023b}. Indeed it has been shown that the predictions of GME derived using the non-local Newtonian potential can also be recovered modelling the gravitational field quantum-mechanically and interacting locally with the matter systems~\cite{bose2022mechanism,christodoulou2023a}.

Beyond the inference of the non-classical nature of the gravitational field, the detection of GME has been argued to have other implications. For example, it has been argued that it would imply that the gravitational field must be in a superposition of geometries during the experiment~\cite{schneider2022, christodoulou2019}, or that virtual gravitons have been exchanged~\cite{marshman2020locality,DanielsonSatishchandranWald2022}.

The existing literature can be divided into approaches which explicitly model the gravitational field as a classical or quantum system coupling with a specific interaction to the matter degrees of freedom, and those which are model-independent, and make general inferences about the nature of the gravitational field subject to some assumptions.

Model-independent approaches either treat the gravitational field as a classical mediator, as in~\cite{marletto2017,galley2022}, or as a classical communication channel as in~\cite{bose2017,altamirano2018gravity}. In~\cite{marletto2017} the field is modelled as a classical bit, whilst in~\cite{galley2022} it is modelled as a discrete classical system of arbitrary finite size. In the present work, we significantly generalize the classical mediator-based approaches and model the gravitational field as an infinite-dimensional classical system. In the appendices, we explicitly translate the LOCC protocol into a classical mediator approach and show that our approach provides a significant generalization of the LOCC based approach too. Moreover, we explicitly allow the two quantum systems that are to be entangled to be infinite-dimensional as well, which includes operator-algebraic subsystems as they appear in quantum field theory.

Although considering a finite version of a problem can often give us a good first intuition and can be an interesting case of study, the gravitational field is an infinite-dimensional object, both in general relativity and in existing models of semi-classical gravity. As such, generalizing the existing GME arguments to the infinite-dimensional case is a necessary step in being able to use them to rule out classical gravity.

\section{Classical and Quantum physics in the language of $C^*$-algebras}
Before diving into the analysis of gravity-mediated entanglement experiments, we will recall how classical and quantum physics can be formulated within the language of $C^*$-algebras. Roughly speaking, a $C^*$-algebra can be thought of as a subalgebra of the algebra $\cb(\ch)$ of bounded operators on some Hilbert space $\ch$. In the classical case, the $C^*$-algebra is commutative, i.e.\ all of its elements commute with each other. For a brief reminder of some basic definitions and properties of $C^*$-algebras see the appendix.

We give a brief outline of how classical and quantum physics can be described using $C^*$-algebras, based on~\cite{landsman2017}, to which we refer for a more detailed presentation. We start with classical physics. Consider a finite, discrete classical configuration space $X$ (in the infinite case, $X$ could represent the phase space of a classical mechanical system or of a classical field theory). The set of all functions from $X$ to $\mathbb{C}$ is denoted as $C(X)$. By defining multiplication of a function with a scalar and addition of two functions in $C(X)$ pointwise, $C(X)$ becomes a vector space (which is in this case isomorphic to $\mathbb{C}^n$, with $n$ the cardinality of $X$). By introducing multiplication of two functions pointwise, $C(X)$ becomes an algebra. Physical properties of a system are described by real numbers. Thus, the physically relevant functions are the real-valued functions $R(X)$. Clearly, $R(X)$ can be seen as a real-linear subspace of $C(X)$. To describe $R(X)$ in algebraic terms, we introduce the involution $f^*(x) = \overline{f(x)}$, which turns $C(X)$, together with the sup norm $\|f\|_\infty = \sup_{x \in X}  |f(x)| $, into a $C^*$-algebra. The subspace $R(X)$ coincides with the set of self-adjoint functions $C(X)_{\mbox{\small{sa}}}$, i.e., the set of functions for which $f^*=f$. The observable algebra of a classical system configuration space $X$ is then given by the commutative $C^*$-algebra $C(X)$.

In classical physics, the states of a system are the probability measures over the configuration (or phase) space $X$. In the finite case, a state is a function $p:\calp(X)\rightarrow [0,1]$ such that $p(X)=1$ and $p(A\cup B)=p(A)+p(B)$ for all disjoint sets $A,B\in\calp(X)$, where $\calp(X)$ denotes the power set of $X$. The state space of a classical system is then given by the space $\mbox{Pr}(X)$ of all probability measures on $X$, and we note that $\mbox{Pr}(X)$ is convex and compact, where convexity ensures that a probabilistic mixture of two states is itself a state.

For general $C^*$-algebras, states are defined as normalized positive linear functionals on the algebra. Hence, in the classical case, a state is a functional $\omega:C(X)\rightarrow \mathbb{C}$ s.t. $\omega(\mathbf{1}_X)=1$, where $\mathbf{1}_X(x)=1$ for all $x\in X$ and $\omega(f^*f)\geq0$ for all $f\in C(X)$. The linear functional $\mathbf{1}_X$ is the unit element of $C(X)$, i.e.\ $\mathbf{1}_X\cdot f=f\cdot\mathbf{1}_X=f$, and a $C^*$-algebra containing a unit element is called unital. In the following we will explore the connection to the previous definition of states as probability measures on phase space.
We first note that $C(X)$ is the complexification of $R(X)$. The condition that a functional $\omega_R:R(X)\rightarrow\mathbb{R}$ is positive translates to $\omega(f)\geq 0$ if $f(x)\geq 0$ for every $x\in X$. Now let $\omega_R$ be a state on $R(X)$. We can extend $\omega_R$ to a complex-linear functional $\omega:C(X)\rightarrow\mathbb{C}$ by $\omega(f+ig)=\omega_R(f)+i\omega_R(g)$, where $f,g\in R(X)$ and we recall that we can split every $h\in C(X)$ into $h=f+ig$ for some $f\in R(X)$ and some $g\in R(X)$. It is easy to check then that $\omega$ is a state on $C(X)$. On the other hand, restricting an arbitrary state $\omega$ in $C(X)$ to $R(X)$ will give a state on $R(X)$. Therefore we can think of states on $C(X)$ and $R(X)$ as the same object $\omega$. Now we see that every element of the state space $\cs(C(X))$ can be associated with an element of $\mbox{Pr(X)}$ and vice versa. We start with a state $\omega\in\cs(C(X))$ and define the function $P:X\rightarrow\mathbb{R}$ by $P(x)=\omega(\delta_x)$. Furthermore, we have $\sum_{x\in X}\delta_x=\mathbf{1}_X$ and thus $\sum_{x\in X}P(x)=1$, and therefore, $P$ is a probability distribution. By setting $p(U)=\sum_{x\in U}P(x)$ for $U\in\calp(X)$, we find the corresponding probability measure $p$ which is generated by $\omega$. Conversely, consider a probability measure $p\in\mbox{Pr}(X)$ and define $\omega:R(X)\rightarrow \mathbb{R}$ by $\omega(f)=\sum_{x\in X} p(x)f(x)$. Obviously, $\omega$ is positive and normalized and therefore a state on $R(X)$ and thus on $C(X)$.

 Infinite-dimensional commutative $C^*$-algebras can also be understood as functions on some underlying configuration space $X$. Let us denote the set of all continuous functions on $X$ that vanish at infinity by $C_0(X)$. It is well known that every commutative $C^*$-algebra is isomorphic to $C_0(X)$ for some locally compact Hausdorff space $X$, which includes a large variety of topological spaces, for example all manifolds.

Moreover, even in the infinite-dimensional case, we can associate the states on $C_0(X)$ with probability measures. More precisely, let $\omega$ be a state on $C_0(X)$. Then there exists a unique probability measure $\mu$ on $X$ such that $\omega(f)=\int_{X}d\mu(x)f(x)$ for every $f\in C_0(X)$. On the other hand, every probability measure $\mu$ on $X$ defines a state $\omega$ on $C_0(X)$ by the same equation as before. Thus, we can use commutative $C^*$-algebras to model general finite- and infinite-dimensional classical systems.

\begin{example}\label{ex:inf_classical_system}
    As a well-known example of a unital commutative $C^*$-algebra, we consider the space of bounded sequences $\ell^\infty$ (see e.g.~\cite{Weaver2017}), defined by
    \begin{equation*}
        \ell^\infty:=\left\{(x_n)_{n\in\mathbb{N}}\in\mathbb{C}^{\mathbb{N}}\;\left|\; \sup_{n\in\mathbb{N}} |x_n|<\infty \right.\right\}.
    \end{equation*}
    The space is equipped with the norm $\|x\|_{\infty}=\|(x_n)_{n\in\mathbb{N}}\|_\infty=\sup_{n\in\mathbb{N}}|x_n|$. We will see that we can interpret $\ell^\infty$ as an infinite-dimensional generalization of diagonal matrices. To this end, we consider the Hilbert space of square-summable sequences $\ell^2$, which is defined by
    \begin{equation*}
        \ell^2:=\left\{(x_n)_{n\in\mathbb{N}}\in\mathbb{C}^{\mathbb{N}}\; \left| \;\sum_{n\in\mathbb{N}}|x_n|^2<\infty\right.\right\}.
    \end{equation*}
    The inner product is defined by $\braket{x|y}:=\sum_{n\in \mathbb{N}}\overline{x_n}y_n$. We will write for the standard orthonormal basis $\{e_n\}_{n\in\mathbb{N}}$:\begin{align*}
        \ket{1}&=e_1=(1,0,0,\ldots),\\
        \ket{2}&=e_2=(0,1,0,\ldots),\\
        &\;\;\vdots
    \end{align*}
    We can embed $(\ell^\infty,\|\cdot\|_\infty)$ into $(\cb(\ell^2),\|\cdot\|_{\cb(\ell^2)})$, the bounded operators on the Hilbert space $\ell^2$, where $\|\cdot\|_{\cb(\ell^2)}$ is the operator norm, with the map $\iota:\ell^\infty\hookrightarrow\cb(\ell^2)$:\begin{equation*}
        (x_n)_{n\in\mathbb{N}}\mapsto \sum_{n\in\mathbb{N}} x_n\ketbra{n}{n}.
    \end{equation*}
Since $\ell^\infty$ is a commutative unital $C^*$-algebra, it is isomorphic to the set of continuous functions on some compact space $X$. Indeed, $\ell^\infty$ is isomorphic to the set of continuous functions $C( \beta\mathbb{N})$ on the Stone-\v{C}ech-compactification $\beta\mathbb{N}$ of the natural numbers $\mathbb{N}$ (see e.g.~\cite{Aviles2025}). 
\end{example}
This example illustrates that a given commutative $C^*$-algebra may be interpreted as a classical physical system in several different ways: $\ell^\infty$ can be seen as the algebra of bounded functions on the configuration space $\mathbb{N}$, or as the algebra of continuous functions on the compact configuration space $\beta\mathbb{N}$. This observation is relevant for the interpretation of our main theorem. That is, when considering a classical system (say, gravity) that potentially mediates entanglement, we will model the system in terms of a unital $C^*$-algebra $\mathcal{G}$. Due to unitality, $\mathcal{G}$ can be interpreted as the algebra of continuous functions on some compact Hausdorff space $X$. At first sight, it seems as if this excludes classical mechanical systems with a \textit{non-compact} phase space of $\mathbb{R}^{2n}$ (systems of finitely many particles), or classical fields (where $X$ is infinite-dimensional and not compact). However, as the example above shows, unital $C^*$-algebras encompass more general classical systems, such as those of bounded measurable functions on some non-compact 
phase space. For the case of mechanics of finitely many particles, this has been shown explicitly in~\cite{Duvenhage}, where classical mechanics is described in terms of a $C^*$-algebra of bounded complex-valued Borel functions on phase space $\mathbb{R}^{2n}$.

Now we turn our attention to the non-commutative case. It is easy to check that the algebra of bounded operators $\cb(\ch)$ on some Hilbert space $\ch$ is a $C^*$-algebra. On the other hand, it can be shown~\cite{Gelfand1943,bratteli1987} that every $C^*$-algebra is isomorphic to a norm-closed selfadjoint subalgebra of the bounded operators of some Hilbert space $\ch$. In this sense, $C^*$-algebras can be interpreted as sub-theories of Hilbert space quantum theories.

\section{No mediation of entanglement via classical infinite-dimensional systems}

In this section, we will use the language of $C^*$-algebras to analyze entanglement mediation scenarios involving classical mediators and prove that it is impossible to create entanglement in these kinds of scenarios.

\begin{figure}[h]
\centering
\begin{tikzpicture}[thick, every node/.style={scale=1}]

  \node (A) at (0,0) {$A$};
  \node (G) at (1.5,0) {$G$};
  \node (B) at (3,0) {$B$};

  \node[draw, minimum width=2.8 cm, minimum height=1cm, anchor=north] (TAG) at (0.7,2) {$T_{\mathcal{A}\mathcal{G}}$};
  \node[draw, minimum width=1.4cm, minimum height=1cm, anchor=north] (phiB) at (3,2) {$\phi_{\mathcal{B}}$};

  \node[draw, minimum width=1.4cm, minimum height=1cm, anchor=north] (phiA) at (0,4) {$\phi_{\mathcal{A}}$};
  \node[draw, minimum width=2.8cm, minimum height=1cm, anchor=north] (TGB) at (2.3,4) {$T_{\mathcal{G}\mathcal{B}}$};

  \draw[->] (A) -- ($(TAG.south west)!0.25!(TAG.south east)$);
  \draw[->] (G) -- ($(TAG.south west)!0.78!(TAG.south east)$);
  \draw[->] (B) -- (phiB.south);

  \draw[->] ($(TAG.north west)!0.25!(TAG.north east)$) -- (phiA.south);
  \draw[->] ($(TAG.north west)!0.78!(TAG.north east)$) -- ($(TGB.south west)!0.21!(TGB.south east)$);

  \draw[->] (phiB.north) -- ($(TGB.south west)!0.75!(TGB.south east)$);

  \draw[->] (phiA.north) -- ++(0,0.8);
  \draw[->] ($(TGB.north west)!0.21!(TGB.north east)$) -- ++(0,0.8);
  \draw[->] ($(TGB.north west)!0.75!(TGB.north east)$) -- ++(0,0.8);

\end{tikzpicture}
\caption{We consider two quantum systems $A$ and $B$, and a classical mediator $G$. At each step, either $A$ or $B$ can interact with the mediator $G$,
but not both at the same time. $A$ and $B$ are described by arbitrary unital $C^*$-algebras, and $\mathcal{G}$ by an arbitrary unital commutative $C^*$-algebra, and they are composed with an arbitrary $C^*$-tensor product. This includes the cases where $A$ and $B$ are described by the algebras of all bounded operators $\mathcal{A} = \mathcal{B}(\mathcal{H}_A)$ and $\mathcal{B} = \mathcal{B}(\mathcal{H}_B)$ on arbitrary infinite-dimensional Hilbert spaces $\mathcal{H}_A$ and $\mathcal{H}_B$, respectively, as well as much more general scenarios where, for example, both algebras $\mathcal{A}$ and $\mathcal{B}$ are von Neumann algebras.}
\label{setup}
\end{figure}

We consider two laboratories in which we prepare two quantum systems $A$ and $B$. Furthermore, the systems $A$ and $B$ can interact via a classical mediator $G$ . The main case of interest is when $G$ is the gravitational field, but our analysis is agnostic to the physical interpretation of $G$. The interaction of the quantum systems with $G$ is assumed to be local; that is, at each layer of interaction, $G$ interacts either with  $A$ or  with $B$ (see Figure~\ref{setup}).

 To model this setup mathematically, we attach unital $C^*$-algebras $\ca$ and $\cb$ to the systems $A$ and $B$, respectively. To describe the combined system of $A$ and $B$, we will consider a tensor product $\ca\otimes\cb$, obtained by completing the algebraic tensor product $\ca\odot\cb$ with respect to an arbitrary but fixed $C^*$-norm $\|\cdot\|$. See the appendix for some more information on the $C^*$-algebraic tensor product.
Furthermore, we will attach a unital commutative $C^*$-algebra $\cg$ to the mediator $G$, which, as discussed above, arguably describes a very general notion of classical system. Note that it is always possible to adjoin an identity element to a non-unital $C^*$-algebra, which will allow the application of our results even in the non-unital case. In contrast to the composite $AB$ of $A$ and $B$, the composite $AGB$ of $AB$ and $G$ is unique: this is because every commutative $C^*$-algebra is nuclear~\cite{takesaki1964}, that is, its tensor product with any other $C^*$-algebra is unique. Overall, the total system is hence described by a tensor product $C^*$-algebra $\ca\otimes\cg\otimes\cb$. In the Conclusions section, we will say more about how this corresponds to scenarios that we encounter in quantum field theory in Minkowski space.

At the beginning of the experiment, the states in the laboratories $A$ and $B$ are prepared independently of each other and of the gravitational field $G$. Thus, the total initial state $\omega_{\ca\cg\cb}\in(\ca\otimes\cg\otimes\cb)^*$ is described by a product state $\omega_{\ca\cg\cb}=\omega_\a\otimes\omega_\cg\otimes\omega_\cb$, which is a special case of a triseparable state:
\begin{definition}
A \textit{triseparable state}~\cite{Eggeling} is any element in the weak-$*$ closure of the set
\[
   \left\{
       \sum_{i=1}^k\lambda_i \omega_\ca^{(i)}\otimes\omega_\cg^{(i)}\otimes\omega_\cb^{(i)}
   \right\},
\]
where every $\omega_\cs^{(i)}$ is a state on $\cs$, $k\in\mathbb{N}$, and $\lambda_i\geq 0$ as well as $\sum_{i=1}^k \lambda_i=1$.
\end{definition}
The proof that no entanglement can be mediated via the classical mediator $G$ consists of two steps. First, we will show that local interactions between the mediator and one of the systems map triseparable states to triseparable states, which will be a consequence of the fact that all states on the tensor product of a general unital $C^*$-algebra and a unital commuting $C^*$-algebra are separable~\cite{takesaki1979} and of the structure of the interactions. Then, in the second step, we will show that the reduced state on $\ca \otimes \cb$ of a triseparable state on $\ca \otimes \cg \otimes \cb$ is separable.

For the first step, we will start by clarifying what we mean by local interaction.
A linear map $\phi:\cc\to\cd$ between unital $C^*$-algebras is called \textit{unital} if $\phi(\mathbf{1_\cc})=\mathbf{1}_\cd$. Furthermore, $\phi$ is completely positive (CP) if\begin{equation}
    \sum_{i,j=1}^n D^*_i\phi(C_i^*C_j)D_j\geq 0,
\end{equation}
for all $n\in\mathbb{N}$, $C_i\in\cc$ and $D_i\in\cd$. Equivalently, the map $\phi$ is CP if $\mbox{id}_{\mathbb{C}^{k\times k}}\otimes\phi:\mathbb{C}^{k\times k}\otimes\cc\rightarrow\mathbb{C}^{k\times k}\otimes\cd$, acting as
\begin{equation}\label{cpmap}
    \begin{pmatrix}
    \cc_{11}&\ldots &\cc_{1k}\\
    \vdots& \ddots& \vdots\\
    \cc_{k1}&\ldots&\cc_{kk}
    \end{pmatrix}\mapsto
    \begin{pmatrix}
    \phi(\cc_{11})&\ldots &\phi(\cc_{1k})\\
    \vdots& \ddots& \vdots\\
    \phi(\cc_{k1})&\ldots&\phi(\cc_{kk})
    \end{pmatrix},
\end{equation}
is positive for every $k\in\mathbb{N}$. In~(\ref{cpmap}), the identification of $\mathbb{C}^{k\times k}\otimes\cc$ with $\cc^{k\times k}$, i.e.\ with the $C^*$-algebra of $k\times k$-matrices with entries $C_{ij}\in\cc$, is made.
A channel $\phi$ (in the Heisenberg picture) from $\mathcal{A}$ to $\mathcal{B}$, where $\mathcal{A}$ and $\mathcal{B}$ are $C^*$-algebras, is then by definition a unital CP map $\phi:\mathcal{B}\to\mathcal{A}$~\cite{OhyaPetz}.
 
It is easy to see that the dual map $\phi^*:\cc^*\rightarrow\cd^*$ of a channel $\phi:\cc\rightarrow\cd$ describes the Schr\"odinger picture of $\phi$ and maps states to states: let $\omega$ be a state on $\cd$, then, for every positive $C\in\cc$, we have $\phi(C)\geq 0$ and thus $\phi^*(\omega)[C]=\omega[\phi(C)]\geq 0$, and 
\begin{equation*}
    \phi^*(\omega)[\mathbf{1_\cc}]=\omega[\phi(\mathbf{1}_\cc)]=\omega[\mathbf{1}_\cd]=1,
\end{equation*}
and therefore $\phi^*(\omega)$ is a state on $\cc$.
Moreover, channels are always bounded~\cite{russo1966note}, and thus, the dual map $\phi^*$ is weak*-continuous (see Proposition 1.3 in Chapter VI of~\cite{conway2007}).

An interaction between $A$ and $G$ can be described by a channel of the form $T_{\ca\cg} : \ca \otimes \cg \rightarrow \ca \otimes \cg$. To ensure that the interaction between $A$ and $G$ occurs locally, we will describe the total channel by a product channel that acts on $A$ and $G$ independently of $B$, that is, by a channel of the form $T_{\ca\cg} \otimes \phi_{\cb}$, where $\phi_\cb : \cb \rightarrow \cb$ is a channel as well. Note that some care has to be taken here: in general, the map $T_{\ca\cg} \otimes \phi_{\cb}$ is only well-defined on the algebraic tensor product $(\ca \otimes \cg) \odot \cb$. In general, additional assumptions may be needed to ensure that there is an extension of this map to a channel on $\ca\otimes\cg \otimes \cb$, and Example~\ref{ExNonexistence} in the appendix shows a case where an extension of this kind does not exist. If a completely positive and thus continuous extension exists, it is unique, since $(\ca\otimes\cg)\odot\cb$ is dense in $\ca\otimes\cg\otimes\cb$. For example, according to~\cite[Proposition 4.23]{takesaki1979}, an extension as a channel exists if the two algebras $\ca\otimes\cg$ and $\cb$ are composed via the minimal or the maximal tensor product. Here, we make the assumption that there exists an extension as a channel, because otherwise, the physical description of our scenario (as in Figure~\ref{setup}) would not even make sense: we certainly only need to worry about the mediation of entanglement by \textit{physically implementable} channels, i.e.\ by $T_{\ca\cg}$ that can actually be implemented locally on $AG$ consistent with the parallel existence of $B$.

\begin{lemma}
\label{TagTrisepToTrisep}
The dual map $(T_{\ca\g}\otimes\phi_\cb)^*$ of a channel of the form $T_{\ca\g}\otimes\phi_\cb$ maps triseparable states to triseparable states on $\ca\otimes\cg\otimes\cb$.
\end{lemma}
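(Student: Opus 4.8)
The plan is to exploit linearity and weak-$*$ continuity of the dual map to reduce the claim to the action on a single product state, and then to use the commutativity of $\cg$ to conclude that the image stays triseparable. First I would record the facts already available in the excerpt: the dual map $\Phi:=(T_{\ca\cg}\otimes\phi_\cb)^*$ is linear and, since $T_{\ca\cg}\otimes\phi_\cb$ is a channel (hence bounded), weak-$*$ continuous, and it sends states to states. Because the set $\cs_{\mathrm{tri}}$ of triseparable states is by definition the weak-$*$ closure of the convex set $\cs_0$ of \emph{finite} convex combinations $\sum_i\lambda_i\,\omega_\ca^{(i)}\otimes\omega_\cg^{(i)}\otimes\omega_\cb^{(i)}$, it suffices to prove $\Phi(\cs_0)\subseteq\cs_{\mathrm{tri}}$: weak-$*$ continuity then yields $\Phi(\cs_{\mathrm{tri}})=\Phi(\overline{\cs_0})\subseteq\overline{\Phi(\cs_0)}\subseteq\overline{\cs_{\mathrm{tri}}}=\cs_{\mathrm{tri}}$. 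By linearity of $\Phi$ and convexity of $\cs_{\mathrm{tri}}$, it is in turn enough to treat a single product state $\omega_\ca\otimes\omega_\cg\otimes\omega_\cb$.

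Next I would compute the image of such a product state explicitly. Evaluating $\Phi(\omega_\ca\otimes\omega_\cg\otimes\omega_\cb)$ on elementary tensors $a\otimes g\otimes b$ and using that $T_{\ca\cg}\otimes\phi_\cb$ acts as a product channel across the bipartition $(\ca\otimes\cg)\mid\cb$, one obtains
\begin{equation*}
\Phi(\omega_\ca\otimes\omega_\cg\otimes\omega_\cb)=T_{\ca\cg}^*(\omega_\ca\otimes\omega_\cg)\otimes\phi_\cb^*(\omega_\cb),
\end{equation*}
the identity extending from elementary tensors to all of $\ca\otimes\cg\otimes\cb$ by density and norm-boundedness of both sides. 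Here $\sigma_\cb:=\phi_\cb^*(\omega_\cb)$ is a state on $\cb$ and $\eta:=T_{\ca\cg}^*(\omega_\ca\otimes\omega_\cg)$ is a state on $\ca\otimes\cg$. Thus the image is a product of a state on $\ca\otimes\cg$ with a state on $\cb$, and the only thing left is to show that $\eta\otimes\sigma_\cb$ lies in $\cs_{\mathrm{tri}}$, i.e.\ that $\eta$ is separable across the $\ca$–$\cg$ cut.

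This last point is the crux, and the main obstacle, and it is exactly where classicality of the mediator enters: I claim that because $\cg$ is commutative, \emph{every} state $\eta$ on $\ca\otimes\cg$ is separable. (The tensor product is unambiguous, $\ca\otimes\cg=\ca\otimes_{\min}\cg$, since a commutative algebra is nuclear.) To prove this I would use Hahn–Banach separation in the weak-$*$ topology, whose continuous functionals are the elements of $\ca\otimes\cg$: if $\eta$ were not in the weak-$*$ closed convex hull of product states, there would be a self-adjoint $W\in\ca\otimes\cg$ and $c\in\mathbb R$ with $\eta(W)>c\geq(\psi\otimes\chi)(W)$ for all states $\psi$ on $\ca$ and $\chi$ on $\cg$. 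Writing $\cg\cong C(X)$ with $X$ compact Hausdorff, its pure states are the point evaluations $\delta_x$, and under $\ca\otimes_{\min}C(X)\cong C(X,\ca)$ one has $(\psi\otimes\delta_x)(W)=\psi(W(x))$ with $W(x)\in\ca$ self-adjoint. Taking the supremum over states $\psi$ forces $\max\mathrm{spec}\,W(x)\leq c$, i.e.\ $W(x)\leq c\,\mathbf 1_\ca$ for every $x$, hence $W\leq c\,\mathbf 1$ pointwise and therefore $\eta(W)\leq c$, a contradiction. Granting this, $\eta$ is a weak-$*$ limit of convex combinations $\sum_j\mu_j\,\tau_\ca^{(j)}\otimes\tau_\cg^{(j)}$, so $\eta\otimes\sigma_\cb\in\cs_{\mathrm{tri}}$, which together with the reduction above proves the lemma. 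The delicate points are the justification of the product-channel computation on the (possibly non-minimal) tensor product $\ca\otimes\cb$ sitting inside $\ca\otimes\cg\otimes\cb$, and the separation argument itself, where the hypothesis that $\cg$ is classical is indispensable — had $\cg$ been noncommutative, $\eta$ could be entangled and the conclusion would fail, as it must.
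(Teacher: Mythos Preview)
Your overall architecture coincides with the paper's: reduce by linearity and weak-$*$ continuity to a single product state, verify that $(T_{\ca\cg}\otimes\phi_\cb)^*(\omega_\ca\otimes\omega_\cg\otimes\omega_\cb)=\eta\otimes\sigma_\cb$ with $\eta=T_{\ca\cg}^*(\omega_\ca\otimes\omega_\cg)$, and then use commutativity of $\cg$ to show $\eta$ is separable. The genuine difference lies in this last step. The paper quotes Takesaki's result that every pure state on the tensor product of a unital commutative $C^*$-algebra with another unital $C^*$-algebra is a product state, and then invokes Krein--Milman (the state space is the weak-$*$ closed convex hull of its pure states). Your route is a direct Hahn--Banach/entanglement-witness argument via the identification $\ca\otimes_{\min} C(X)\cong C(X,\ca)$, reducing the inequality on product states to a pointwise operator inequality $W(x)\leq c\,\mathbf 1_\ca$. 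Both are valid; yours is self-contained and makes the role of classicality very explicit, while the paper's is shorter at the cost of a black-box citation.

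One point you pass over too quickly: from ``$\eta$ is separable on $\ca\otimes\cg$'' to ``$\eta\otimes\sigma_\cb\in\cs_{\mathrm{tri}}$'' you are implicitly using that the map $\omega\mapsto\omega\otimes\sigma_\cb$ from $(\ca\otimes\cg)^*$ to $(\ca\otimes\cg\otimes\cb)^*$ is weak-$*$ continuous, so that a weak-$*$ limit of finite convex combinations of product states gets carried to a weak-$*$ limit of triseparable states. The paper isolates exactly this as a separate proposition and proves it (tensoring with a fixed state is weak-$*$ continuous on any $C^*$-tensor product). You should state and justify this step; it is not entirely automatic, particularly since you allow an arbitrary $C^*$-norm on $\ca\otimes\cb$.
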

\begin{proof}
We start with a product state $\omega_\ca\otimes\omega_\cg\otimes\omega_\cb$. A general element of $Z\in\ca\otimes\cg\otimes\cb$ lies in the norm-closure of the algebraic tensor product $(\ca\otimes\cg)\odot\cb$ of $\a\otimes\cg$ and $\cb$, thus it can be written as the limit of linear combinations of simple tensors between elements of $\ca\otimes\cg$ and $\cb$. That is, we can write every such element $Z$ in the form $Z=\lim_\alpha\sum_{i=1}^{n_\alpha}D_\alpha^i\otimes B^i_\alpha$, where $D_\alpha^i\in\ca\otimes\cg$, $B^i_\alpha\in\cb$, and $\lim_\alpha$ is taken with respect to the norm on $\ca\otimes\cg\otimes\cb$. Then \begin{align}
    ((T_{\ca\cg}\otimes\phi_\cb)^*(\omega_\ca\otimes\omega_\cg\otimes\omega_\cb))[Z]&=(\overbrace{(\omega_\ca\otimes\omega_\cg\otimes\omega_\cb)}^{\mbox{continuous}}\left(\overbrace{(T_{\ca\cg}\otimes\phi_\cb)}^{\mbox{continuous}}\left[\lim_\alpha\sum_{i=1}^{n_\alpha}D_\alpha^i\otimes B^i_\alpha\right]\right)\nonumber\\
    &=\lim_\alpha\sum_{i=1}^{n_\alpha}(\omega_\ca\otimes\omega_\cg\otimes\omega_\cb)\left((T_{\ca\cg}\otimes\phi_\cb)\left[D_\alpha^i\otimes B^i_\alpha\right]\right)\nonumber\\
     &=\lim_\alpha\sum_{i=1}^{n_\alpha}(\omega_\ca\otimes\omega_\cg\otimes\omega_\cb)\left[T_{\ca\cg}(D_\alpha^i)\otimes \phi_\cb(B^i_\alpha)\right]\nonumber\\
     &=\lim_\alpha\sum_{i=1}^{n_\alpha}(\omega_\ca\otimes\omega_\cg)[T_{\ca\cg}(D_\alpha^i)]\cdot\omega_\cb\left[\phi_\cb(B^i_\alpha)\right]\nonumber\\
&=\lim_\alpha\sum_{i=1}^{n_\alpha}T^*_{\ca\cg}(\omega_\ca\otimes\omega_\cg)[D_\alpha^i]\cdot\phi^*_\cb(\omega_\cb)\left[B^i_\alpha\right]\nonumber\\
&=\lim_\alpha\sum_{i=1}^{n_\alpha}\omega_{\ca\cg}[D_\alpha^i]\cdot\tilde\omega_\cb\left[B^i_\alpha\right]=\lim_\alpha\sum_{i=1}^{n_\alpha}\overbrace{\omega_{\ca\cg}\otimes\tilde\omega_\cb}^{\mbox{continuous}}\left[D_\alpha^i\otimes B^i_\alpha\right]\nonumber\\
&=\omega_{\ca\cg}\otimes\tilde\omega_{\cb}\left[\lim_\alpha\sum_{i=1}^{n_\alpha}D_\alpha^i\otimes B^i_\alpha\right]=\omega_{\ca\cg}\otimes\tilde\omega_{\cb}\left[Z\right],
\end{align}
for every $Z\in\ca\otimes\cg\otimes\cb$. We have defined $\omega_{\ca\cg}=T^*_{\ca\cg}(\omega_\a\otimes\omega_\cg)$ and $\tilde\omega_\cb=\phi^*_\cb(\omega_\cb)$, which are states on $\ca\otimes\cg$ and $\cb$, respectively.
Next, we want to show that $\omega_{\ca\cg}\otimes\tilde\omega_{\cb}$ is a triseparable state. In ~\cite{takesaki1979} it was proven that every pure state on the tensor product of a unital commuting (i.e.\ classical) and some other unital $C^*$-algebra is always a product state. Combining this with the fact that the state space of a unital $C^*$-algebra is the weak*-closure of the convex hull of its pure states~\cite{bratteli1987} tells us that in this case, every state $\omega_{\ca\cg}$ is separable.   
Thus we can write $\omega_{\ca\cg}$ as the weak*-limit of a converging net of convex combinations of product states, i.e.\ $\omega_{\ca\cg}[d]=\lim_\beta\sum_{i=1}^{n_\beta}\lambda^i_\beta\left(\omega_\ca\right)^i_\beta\otimes(\omega_\cg)^i_\beta[d]$ for every $d\in\ca\otimes\cg$, where $(\omega_\ca)_\beta^i\in\ca^*$ and $(\omega_\cg)^i_\beta\in\cg^*$ are states and $\lambda_\beta^i\geq0$ and $\sum_{i_1}^{n_\beta}\lambda^i_\beta=1$ for every $\beta$. 

In the appendix, we show that the map $\omega\mapsto \omega\otimes\tau$ is weak*-continuous (see Proposition~\ref{wstartensorpro}). Thus,
\begin{equation}
\omega_{\ca\cg}\otimes\tilde{\omega}_{\cb}[Z]=\left(\left(\lim_\beta\sum_{i=1}^{n_\beta}\lambda^i_{\beta}(\omega_{\ca})^i_\beta\otimes(\omega_\cg)^i_\beta\right)\otimes\tilde\omega_\cb\right)[Z]=\lim_\beta\sum_{i=1}^{n_\beta}\lambda^i_{\beta}\left((\omega_{\ca})^i_\beta\otimes(\omega_\cg)^i_\beta\otimes\tilde\omega_\cb\right)[Z],
    \end{equation}
    for every $Z\in\ca\otimes\cg\otimes\cb$. Therefore, $\omega_{\ca\cg}\otimes\tilde\omega_\cb$ is triseparable.

So far we have seen that $(T_{\ca\g}\otimes\phi_\cb)^*$ maps $\omega_\ca\otimes\omega_\cg\otimes\omega_\cb$ to a triseparable state, but from the linearity and the weak* continuity of $(T_{\ca\cg}\otimes\phi_\cb)^*$ it follows that it maps general triseparable states to triseparable states. 
\end{proof}

By the symmetry of the problem, the dual of a channel of the form $\phi_\ca\otimes T_{\cg\cb}$ will also map triseparable states to triseparable states; see Lemma~\ref{LemAssociativitySortOf} in the appendix for why we are free to regard the scenario as a composite of the form $(AG)B$ or $A(GB)$, despite the fact that $C^*$-tensor products are not in general associative. Furthermore, countable sequences of local interactions between $A$ and $G$, and/or between $B$ and $G$, will map a triseparable state to a triseparable state (if the limit exists):

\begin{corol}\label{cor:sequence_states}
Let $\{T^{(n)}_{\ca\cg\cb}\}_n$ be a sequence of channels where, for each $n$, the channel is either of the form
$T^{(n)}_{\ca\cg\cb} = T^{(n)}_{\ca\cg} \otimes \phi^{(n)}_{\cb}$ or $T^{(n)}_{\ca\cg\cb} = \phi^{(n)}_{\ca} \otimes T^{(n)}_{\cg\cb}$, where $T^{(n)}_{\ca\g},T^{(n)}_{\cg\cb},$ $\phi^{(n)}_{\ca}$ and $\phi^{(n)}_{\cb}$ are also channels.
Define the sequence $\{\tilde{T}^{(n)}_{\ca\cg\cb}\}_n$ by
$\tilde{T}^{(n)}_{\ca\cg\cb} = T^{(n)}_{\ca\cg\cb} \circ \cdots \circ T^{(1)}_{\ca\cg\cb}$. Let $\omega_{\ca\cg\cb}$ be a triseparable state. If the weak*-limit of the sequence $\{(\tilde T^{(n)}_{\ca\cg\cb})^*(\omega_{\ca\cg\cb})\}_n$ exists, then it is a triseparable state.
\end{corol}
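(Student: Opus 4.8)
The plan is to reduce the corollary to a short induction on the number of interactions, resting entirely on Lemma~\ref{TagTrisepToTrisep} and on the fact that weak*-closedness is built into the very definition of triseparability. First I would record how the dual behaves under composition: since $\tilde T^{(n)}_{\ca\cg\cb}=T^{(n)}_{\ca\cg\cb}\circ\cdots\circ T^{(1)}_{\ca\cg\cb}$ and duals reverse the order of composition, we have $(\tilde T^{(n)}_{\ca\cg\cb})^{*}=(T^{(1)}_{\ca\cg\cb})^{*}\circ\cdots\circ(T^{(n)}_{\ca\cg\cb})^{*}$. By hypothesis each factor $T^{(j)}_{\ca\cg\cb}$ is either of the form $T_{\ca\cg}\otimes\phi_\cb$ or $\phi_\ca\otimes T_{\cg\cb}$, so Lemma~\ref{TagTrisepToTrisep} together with its symmetric counterpart (noted immediately above) guarantees that \emph{each} individual dual $(T^{(j)}_{\ca\cg\cb})^{*}$ sends triseparable states to triseparable states.

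Next I would note that the class of maps sending triseparable states to triseparable states is closed under composition, so the finite composition $(\tilde T^{(n)}_{\ca\cg\cb})^{*}$ inherits this property regardless of the order in which the factors appear. Applied to the triseparable initial state $\omega_{\ca\cg\cb}$, this shows that $\sigma_n:=(\tilde T^{(n)}_{\ca\cg\cb})^{*}(\omega_{\ca\cg\cb})$ is triseparable for every $n$. Equivalently, one may phrase this as an induction that peels off one factor at a time from the inside: $(T^{(n)}_{\ca\cg\cb})^{*}(\omega_{\ca\cg\cb})$ is triseparable, applying $(T^{(n-1)}_{\ca\cg\cb})^{*}$ keeps it triseparable, and so on. Crucially, no convergence hypothesis is needed at this stage; every finite composition is unconditionally triseparability-preserving.

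Finally I would invoke closedness to handle the limit. The whole sequence $\{\sigma_n\}_n$ lies in the set of triseparable states, which by definition is a weak*-closure and hence weak*-closed; a weak*-closed set contains the limits of all its weak*-convergent nets, and a sequence is a special case. Since the weak*-limit $\sigma:=\lim_n\sigma_n$ exists by assumption, it follows that $\sigma$ is again triseparable, which is precisely the claim. The only genuinely delicate point is this last step: the evolved states $\sigma_n$ are finite triseparable mixtures, but their limit need not be such a finite mixture, so triseparability of $\sigma$ would fail if triseparability had been defined merely as the convex hull of product states. The real content of the corollary is thus the observation that the definitional weak*-closure is exactly strong enough to absorb the (possibly infinite) sequential limit of local interactions, with all the heavier analytic work---continuity of the dual maps, passage through the norm-closure of the algebraic tensor product, and the decomposition of the intermediate states into product states---having already been carried out in the proof of Lemma~\ref{TagTrisepToTrisep}.
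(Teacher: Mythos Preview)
Your proof is correct and follows essentially the same route as the paper: apply Lemma~\ref{TagTrisepToTrisep} (and its $\phi_\ca\otimes T_{\cg\cb}$ counterpart) to each factor of the composition $(\tilde T^{(n)}_{\ca\cg\cb})^{*}=(T^{(1)}_{\ca\cg\cb})^{*}\circ\cdots\circ(T^{(n)}_{\ca\cg\cb})^{*}$ to conclude that every $\sigma_n$ is triseparable, and then use the definitional weak*-closedness of the triseparable states to pass to the limit. One minor imprecision in your closing commentary: the intermediate states $\sigma_n$ need not be \emph{finite} convex combinations of product states (Lemma~\ref{TagTrisepToTrisep} only yields triseparability, i.e.\ membership in the weak*-closure), but this does not affect the argument since you only use that they are triseparable.
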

\begin{proof}
From Lemma~\ref{TagTrisepToTrisep}, it follows immediately that for every $n$, it holds that $(\tilde{T}^{(n)}_{\ca\cg\cb})^*(\omega_{\a\cg\cb})=(T^{(1)}_{\ca\cg\cb})^*\circ\ldots\circ(T^{(n)}_{\ca\cg\cb})^*(\omega_{\a\cg\cb})$ is a triseparable state. Since the set of the triseparable states is weak*-closed, the weak*-limit is also a triseparable state. 
\end{proof}
At the end of the experiment, we are interested in the reduced state on $\ca\otimes\cb$ and whether it can be entangled. The reduced state of $\omega_{\ca\cb\cg}$ is defined by $\omega_{\ca\cb}(\cdot)=\omega_{\ca\cb\cg}((\cdot)\otimes\mathbf{1}_{\cg})$, where we changed the order of $\ca,\cb$ and $\cg$ for convenience.
\begin{lemma}
    The reduced state $\omega_{\ca\cb}$ of a triseparable state $\omega_{\a\cb\cg}$ is separable.
\end{lemma}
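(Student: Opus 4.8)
The plan is to realize the reduction map as the dual of a fixed, unital $*$-homomorphism, show that it is weak*-continuous and sends states to states, and then reduce everything to an elementary finite computation by a density-and-closure argument. Concretely, since $\cg$ is nuclear we may identify $\ca\otimes\cg\otimes\cb$ with $(\ca\otimes\cb)\otimes\cg$, and the insertion $\iota:\ca\otimes\cb\to\ca\otimes\cg\otimes\cb$ given (after the reordering of factors used in the definition of the reduced state) by $y\mapsto y\otimes\mathbf{1}_\cg$ is an injective unital $*$-homomorphism, hence bounded. The reduction is then precisely $\omega_{\ca\cb}=\iota^*(\omega_{\ca\cg\cb})=\omega_{\ca\cg\cb}\circ\iota$. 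Being the dual of a bounded map, $\iota^*$ is weak*-continuous, exactly as used earlier in the excerpt, and since $\iota$ is unital and positive, $\iota^*$ maps states to states. I also record that, in complete analogy with triseparable states, a \emph{separable} state on $\ca\otimes\cb$ is any element of the weak*-closure of the convex combinations $\sum_i\lambda_i\,\omega_\ca^{(i)}\otimes\omega_\cb^{(i)}$; in particular the separable states form a weak*-closed set.

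First I would treat a finite triseparable state $\rho=\sum_{i=1}^k\lambda_i\,\omega_\ca^{(i)}\otimes\omega_\cg^{(i)}\otimes\omega_\cb^{(i)}$. Evaluating its reduction on a simple tensor $a\otimes b\in\ca\otimes\cb$ and using that each $\omega_\cg^{(i)}$ is normalized, i.e.\ $\omega_\cg^{(i)}(\mathbf{1}_\cg)=1$, gives
\[
\rho_{\ca\cb}(a\otimes b)=\rho(a\otimes\mathbf{1}_\cg\otimes b)=\sum_{i=1}^k\lambda_i\,\omega_\ca^{(i)}(a)\,\omega_\cg^{(i)}(\mathbf{1}_\cg)\,\omega_\cb^{(i)}(b)=\sum_{i=1}^k\lambda_i\,\omega_\ca^{(i)}(a)\,\omega_\cb^{(i)}(b).
\]
Thus $\rho_{\ca\cb}$ and the separable functional $\sum_i\lambda_i\,\omega_\ca^{(i)}\otimes\omega_\cb^{(i)}$ agree on all simple tensors; since both are bounded and linear and the algebraic tensor product $\ca\odot\cb$ is norm-dense in $\ca\otimes\cb$, they agree everywhere. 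Hence the reduction of a finite triseparable state is manifestly separable. Note that commutativity of $\cg$ plays no role here; all that is used is that $\cg$-states are normalized, so the mediator factor simply drops out.

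For a general triseparable state $\omega_{\ca\cg\cb}$ I would write it as the weak*-limit of a net $\{\omega^{(\alpha)}\}$ of finite triseparable states. By the weak*-continuity of $\iota^*$, the reduced states converge, $\iota^*(\omega^{(\alpha)})\to\iota^*(\omega_{\ca\cg\cb})=\omega_{\ca\cb}$ in the weak*-topology. Each $\iota^*(\omega^{(\alpha)})$ is separable by the previous step, and since the separable states are weak*-closed, the limit $\omega_{\ca\cb}$ is separable, as claimed.

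The main thing to get right—rather than a genuinely deep obstacle—is the limit interchange: one must be sure that the reduction truly commutes with the weak*-limit defining a triseparable state. This is guaranteed precisely because the reduction is the dual $\iota^*$ of a \emph{single fixed} bounded map $\iota$, so no $\alpha$-dependent estimates are required. The one structural point worth checking carefully is that $\iota$ is well defined as a map into $\ca\otimes\cg\otimes\cb$ irrespective of the (arbitrary) $C^*$-norm chosen on $\ca\otimes\cb$; this is exactly what nuclearity of $\cg$ secures, as it renders the outer tensor product with $\cg$ unique and hence makes $\iota:\,y\mapsto y\otimes\mathbf{1}_\cg$ a bona fide $*$-homomorphism.
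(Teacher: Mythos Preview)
Your proof is correct and follows essentially the same approach as the paper: define the reduction as the dual of the bounded map $y\mapsto y\otimes\mathbf{1}_\cg$, verify it sends product (hence finite triseparable) states to separable states, and extend to the weak*-closure by linearity and weak*-continuity of the dual. Your version is somewhat more explicit---you spell out the density argument on $\ca\odot\cb$, the weak*-closedness of the separable states, and the role of nuclearity of $\cg$ in making $\iota$ well defined---but the underlying argument is the same.
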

\begin{proof}
    It is easy to check that the map $\phi_\mathbf{1}:\ca\otimes\cb\rightarrow\ca\otimes\cb\otimes\cg$ defined by $D \mapsto D\otimes\mathbf{1}_{\cg}$ is bounded, and therefore, $\phi^*_{\mathbf{1}}$ is weak*-continuous. Consider a product state $\omega_{\ca}\otimes\omega_{\cb}\otimes\omega_{\cg}$, then \begin{equation}
    \omega_{\ca\cb}(D)=\phi^*_\mathbf{1}(\omega_\ca\otimes\omega_\cb\otimes\omega_\cg)(D)=\omega_\ca\otimes\omega_\cb\otimes\omega_\cg(D\otimes\mathbf{1}_\cg)=\omega_\ca\otimes\omega_\cb(D)\omega_\cg(\mathbf{1}_\cg)=\omega_\ca\otimes\omega_\cb(D),
\end{equation}
for all $D\in\ca\otimes\cb$. Therefore, $\omega_{\ca\cb}=\omega_A\otimes\omega_B$, which is a separable state. Furthermore, since $\phi^*_\mathbf{1}$ is linear and weak*-continuous, and the set of separable states is weak*-closed, it follows that a general triseparable state gets mapped to a separable one.
\end{proof}
This concludes our discussion on why a classical system cannot mediate entanglement.

\section{Conclusion}
In this work, we have extended previous no-go results on gravitationally mediated entanglement to the case of continuous classical mediators. By adopting a $C^*$-algebraic formalism, we have shown that classicality, defined in terms of commutativity of the observable algebra, fundamentally prohibits the generation of entanglement between two initially uncorrelated quantum systems. This holds regardless of whether the classical mediator is finite or infinite-dimensional, discrete or continuous, and regardless of the operator-algebraic description of the quantum systems and their choice of tensor product composition. Our analysis generalizes the classical mediator and LOCC-based arguments used in prior studies, which  typically rely on restrictive assumptions such as finite-dimensional or discrete mediators as well as finite-dimensional quantum systems.

It is interesting to compare our results to recent works by van Luijk et al.~\cite{van2024relativisticpub,VL2,VL3}, who consider LOCC protocols for operator-algebraic quantum subsystems of a Hilbert space $\mathcal{H}$. These are modelled as von Neumann subalgebras $\mathcal{M}_A,\mathcal{M}_B$ of the bounded operators of $\mathcal{H}$ in ``Haag duality'', i.e.\ $\mathcal{M}_B$ is exactly the set of operators that commute with all elements of $\mathcal{M}_A$ (denoted $\mathcal{M}_B=\mathcal{M}_A'$), and vice versa. They show (see e.g.~\cite[Theorem D]{VL3}) that LOCC protocols for the corresponding bipartite system can extract an arbitrary amount of entanglement in terms of Bell pairs from any given pure state of an ancillary system if $\mathcal{M}_A$ (and thus $\mathcal{M}_B$) is not of type I. At first sight, this seems to be in tension with our result, but their mathematical framework does not quite capture the scenario of gravity-mediated-entanglement (GME) experiments. This is because, in their commutator-based framework beyond type I, \textit{``all [bipartite] states contain an infinite amount of single-shot entanglement''}~\cite{VL3}, and there is no analog of a product state. That is, it does not make sense to talk about independently locally prepared  systems, which is however a prerequisite of the GME protocol.
\begin{figure}
     \centering
     \begin{subfigure}{0.45\textwidth}
         \centering
         \includegraphics[width=\linewidth]{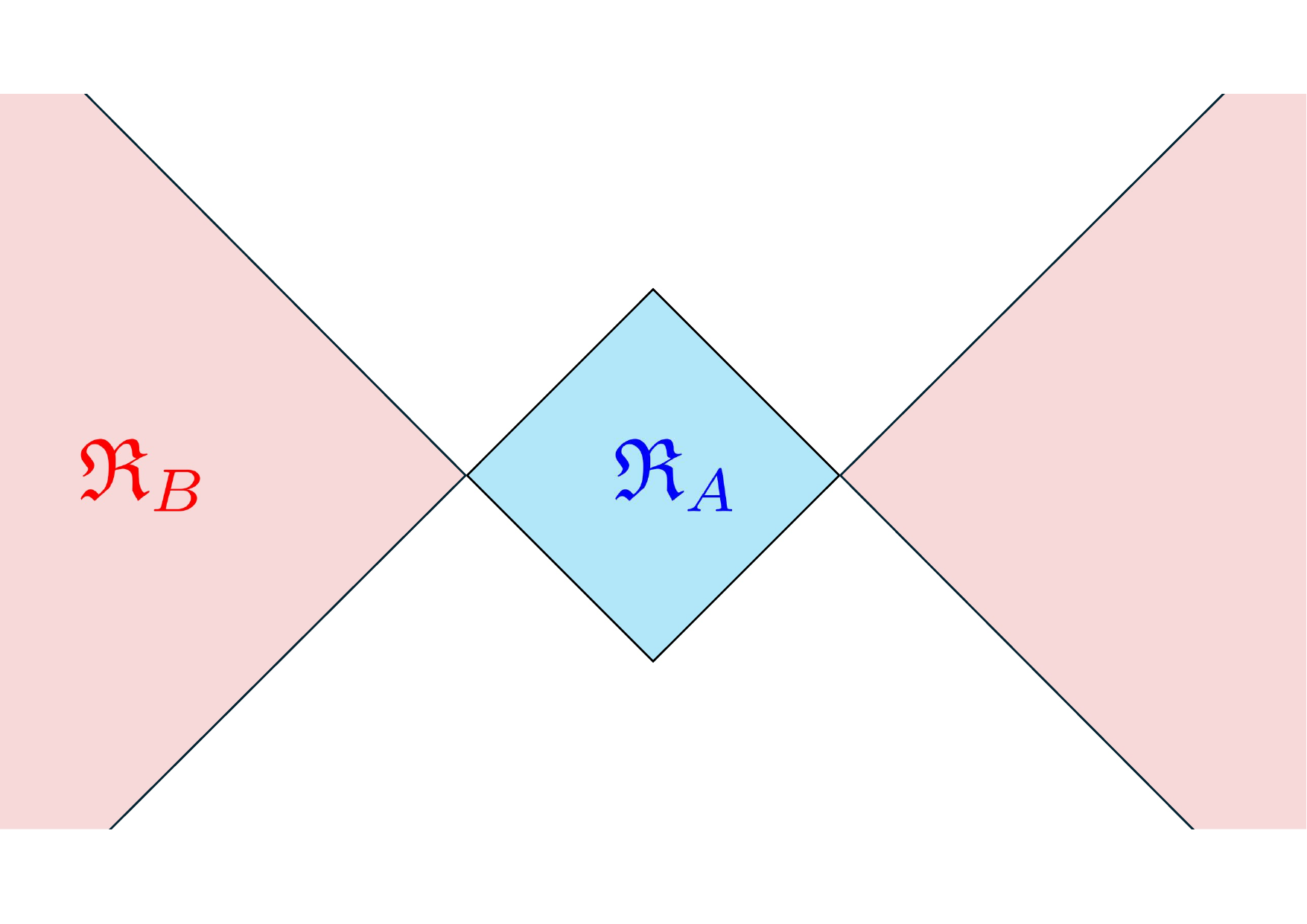}
         \caption{Spacetime regions for Haag duality.}
         \label{haag}
     \end{subfigure}
     \begin{subfigure}{0.45\textwidth}
         \centering
         \includegraphics[width=\linewidth]{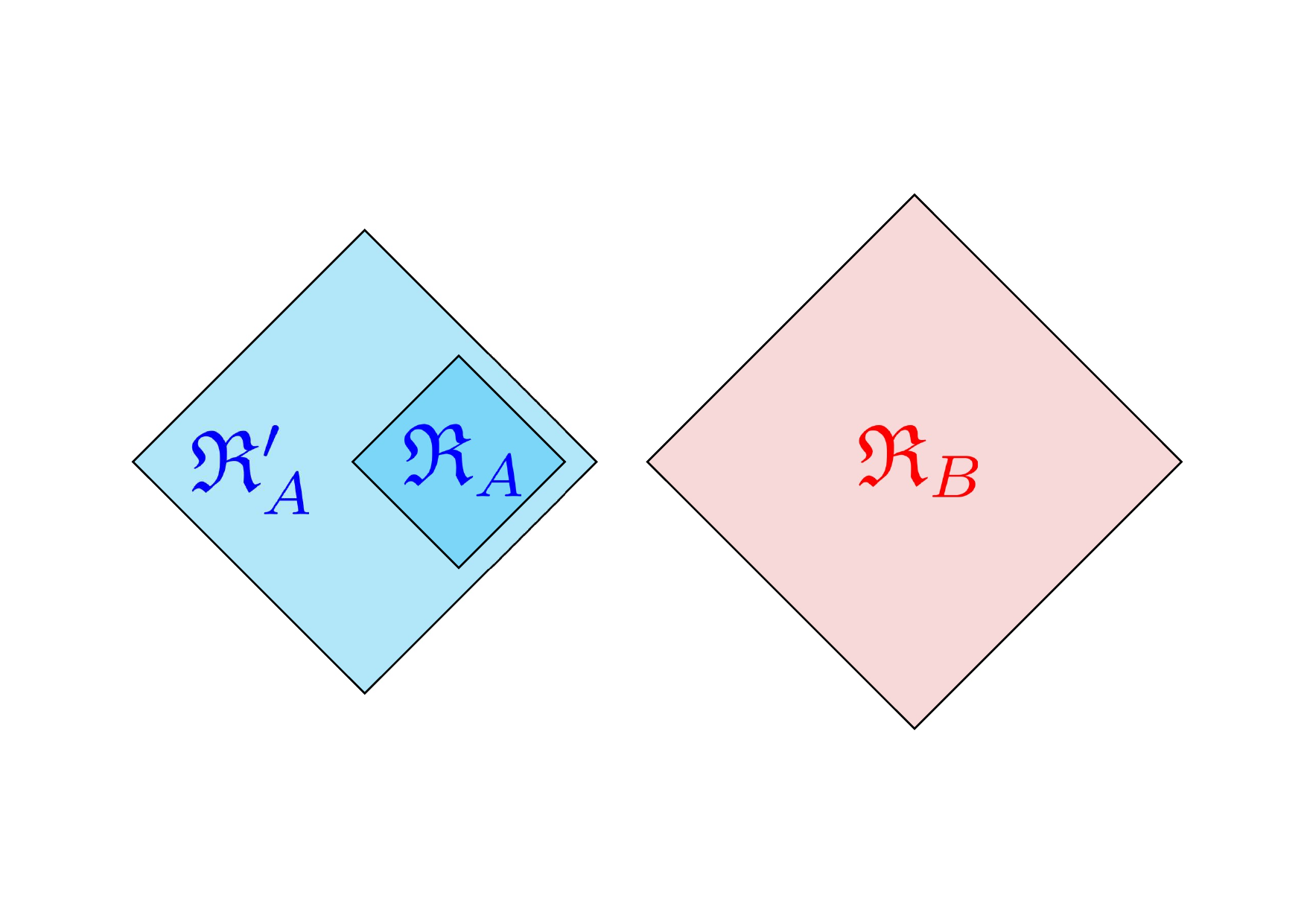}
         \caption{Spacetime regions for the split property. See also \cite{Fewster} for a more detailed discussion.}
         \label{split}
     \end{subfigure}
     \caption{In Fig. \ref{haag}, region $\mathfrak{R}_B$ is the full causal complement of region $\mathfrak{R}_A$, and thus we have Haag duality $\ca(\mathfrak{R}_B)=\ca(\mathfrak{R}_A)'$. In this case, there is no analogue of product states and hence no meaningful notion of independently locally prepared systems. In Fig.~\ref{split}, the regions $\mathfrak{R}_A$ and $\mathfrak{R}_B$ are spacelike separated and they do not touch. We find a region $\mathfrak{R}'_A$ which compactly contains the closure of $\mathfrak{R}_A$ and does not overlap with $\mathfrak{R}_B$. The split property then implies that there exists a notion of locally and independently prepared states on $\mathfrak{R}_A$ and $\mathfrak{R}_B$.}
     \label{spacetimepictures}
\end{figure}

We can turn to quantum field theory for a more concrete picture of the difference of  their setting and ours (see Fig. \ref{spacetimepictures}). To every (open and relatively compact) region $\mathfrak{R}$ of Minkowski spacetime, there is a von Neumann algebra $\mathcal{A}(\mathfrak{R})$ of operators associated with that region. Consider two spacelike separated regions $\mathfrak{R}_A$ and $\mathfrak{R}_B$, then Einstein causality means that $[\mathcal{A}(\mathfrak{R}_A),\mathcal{A}(\mathfrak{R}_B)]=0$. If $\mathfrak{R}_B$ is the full causal complement of $\mathfrak{R}_A$, then we have Haag duality~\cite{Haag}, and $\mathcal{A}(\mathfrak{R}_B)=\mathcal{A}(\mathfrak{R}_A)'$, which reproduces the setting by can Luijk et al. On the other hand, suppose that $\mathfrak{R}_A$ and $\mathfrak{R}_B$ do not touch at their boundaries. This is the situation that we have in the case of two laboratories $A$ and $B$ of bounded spatiotemporal extent and non-zero distance from each other. In this case~\cite{Fewster}, we find a region $\mathfrak{R}'_A$ ``slightly larger'' than $\mathfrak{R}_A$, i.e.\ such that the closure of $\mathfrak{R}_A$ is compactly contained in $\mathfrak{R}'_A$, and $\mathfrak{R}'_A\cap\mathfrak{R}_B=\emptyset$. The \textit{split property} now ensures that there is a type I factor $\mathcal{A}$ such that $\mathcal{A}(\mathfrak{R}_A)\subset\mathcal{A}\subset\mathcal{A}(\mathfrak{R}'_A)$. But this implies that there is a notion of locally independently prepared states on $\mathfrak{R}_A$ and $\mathfrak{R}_B$, and the corresponding global von Neumann algebra $\mathcal{A}(\mathfrak{R}_A)\vee \mathcal{A}(\mathfrak{R}_B)$ is isomorphic to the tensor product $\mathcal{A}(\mathfrak{R}_A)\otimes_{\rm min} \mathcal{A}(\mathfrak{R}_B)$~\cite{brunetti2014}. In this case, it makes sense to talk about locally independently prepared quantum systems, and our conclusion (no GME by classical systems) applies.

It has recently been claimed that  finite-dimensional classical systems can mediate entanglement in the GME setup~\cite{vidal2025bose}. In order to understand how this result relates to our work, we first compare our proof method to that of~\cite{galley2022}, which explicitly required both kinematical and compositional properties of the classical mediating system. Kinematically, it required the existence of a resolution of the identity transformation (i.e.\ the existence of a maximal non-disturbing measurement), and compositionally, it required the standard tensor product composition rule.  Operationally, under the assumptions of the framework of general probabilistic theories (GPTs)~\cite{Mueller2021,Plavala2023}, the use of a tensor product encodes the requirement of \textit{tomographic locality}~\cite{Wootters,Hardy2001,Mueller2021,Plavala2023}: states of a composite system $AB$ are uniquely determined by the statistics and correlations of local measurements on $A$ and on $B$. This principle is satisfied by classical and quantum theory, and by many (but not all) GPTs. The framework of $C^*$-algebras allows for definitions of subsystems which satisfy local tomography~\cite[Definition 1]{vanLuijkSchwonnekStottmeisterWerner2024}, but which do not meet all the other assumptions of subsystems in the GPT framework. In general, the equivalence between local tomography and the existence of a tensor product need not hold in the $C^*$-algebraic framework. However, as shown in~\cite[Theorem 5]{vanLuijkSchwonnekStottmeisterWerner2024}, for $C^*$-algebras, local tomography together with the assumption of \emph{statistical independence} (which is met in GPTs) is equivalent to the existence of a tensor product. This clarifies the operational content of our assumption that the composition of $A$, $G$ and $B$ is described by a $C^*$-tensor product.

Our result is another piece of the puzzle in answering the question of what the observation of entanglement in a GME experiment would imply: the observation of entanglement between spatially separated quantum systems interacting only through gravity would point towards the gravitational field not being adequately described by a classical model (understood as both locally classical and compositionally standard, i.e. composing in a locally tomographic way), even in a very general, infinite-dimensional formulation.

Our findings thus reinforce the view that experiments aiming to detect gravity-induced entanglement offer a powerful probe of the quantum nature of gravity. They also close a significant conceptual loophole in existing arguments, eliminating the possibility that the classicality of the mediator might be salvaged by appealing to infinite-dimensional continuous models. In future work, it would be interesting to see to what extent our no-go result for this \textit{formal} definition of a classical system (as a commutative unital $C^*$-algebra) extends to systems that are classical according to an \textit{operational} definition (for example, admitting of a generalized noncontextual ontological model~\cite{Schmid}). Another natural question is whether and how our results extend to the multipartite case, which has recently been studied as another avenue towards more elaborate tests of gravity-mediated entanglement (see e.g.~\cite{Miki,Ghosal}).

\section*{Acknowledgments}
We are grateful to Andrea Di Biagio
for helpful discussions. This research was funded in part by the Austrian Science Fund (FWF) 10.55776/PAT2839723, via the Austrian Science Fund (FWF) project P 33730-N, and via the Austrian Science Fund (FWF) project COE 1 Quantum Science Austria. Furthermore, this research was supported in part by Perimeter Institute for Theoretical Physics. Research at Perimeter Institute is supported by the Government of Canada through the Department of Innovation, Science, and Economic Development, and by the Province of Ontario through the Ministry of Colleges and Universities.

\appendix
\section{C*-Algebras}
In the appendix, we will give an overview of some definitions and facts about $C^*$-algebras that we need in the main text. For a more detailed introduction to $C^*$-algebras, see e.g.~\cite{bratteli1987, takesaki1979, murphy1990}, and for a more comprehensive presentation of $C^*$-tensor products, see e.g.~\cite{takesaki1979, murphy1990, Bruckler1999}.  
\begin{definition}
    Let $\ca$ be a vector space over the field $\mathbb{K}$. $\ca$ is an algebra if $\ca$ is equipped with a multiplication $\cdot:\ca\times\ca\rightarrow\ca$, denoted as $A\cdot B=AB$, such that\begin{enumerate}
        \item $A(BC)=(AB)C$
        \item $A(B+C)=AB+AC$
        \item $\alpha\beta(AB)=(\alpha A)(\beta B)$
    \end{enumerate}
    for every $A,B,C\in\ca$ and every $\alpha,\beta\in\mathbb{K}$.
\end{definition}
In the following, we will consider algebras over $\mathbb{C}$ unless otherwise stated.
\begin{definition}
Let $\ca$ be an algebra. A map $*:\ca\rightarrow\ca$, denoted as $*(A)=A^*$, is called an involution, or adjoint operator, if
\begin{enumerate}
    \item $A^{**}=A$,
    \item $(AB)^*=B^*A^*$,
    \item $(\alpha A+\beta B)^*=\bar\alpha A^*+\bar\beta B^*$
\end{enumerate}
for all $A,B\in\ca$ and all $\alpha,\beta\in\mathbb{C}$.
\end{definition}
An algebra equipped with an involution is called a *-algebra. A \textit{normed algebra} is an algebra with a norm $\|\cdot\|$ that is submultiplicative, $\|AB\|\leq\|A\|\cdot\|B\|$. Such a norm induces a metric topology on $\ca$, which is called the uniform/norm topology. 
\begin{definition}
    A normed *-algebra $\ca$ which is complete with respect to the norm and satisfies $\|A\|=\|A^*\|$ is called a Banach *-algebra.
\end{definition} 
\begin{definition}
    A Banach *-algebra $\ca$ is called a $C^*$-algebra if it satisfies the $C^*$-identity: \begin{equation*}
        \|A^*A\|=\|A\|^2
    \end{equation*}
    for all $A\in\ca$.
\end{definition}
A (semi)norm satisfying the $C^*$-identity is called a $C^*$-(semi)norm.
A $C^*$-algebra having an identity element $\mathbf{1}$ is called a \emph{unital} $C^*$-algebra. Even though not every $C^*$-algebra $\ca$ is unital, it is always possible to adjoin an identity $\mathbf{1}$ to obtain a unital $C^*$-algebra $\bar\ca=\mathbb{C}\mathbf{1}+\ca$. The dual of $\ca$, understood as the space of continuous linear functionals $\ca \to \mathbb{C}$, is denoted by $\ca^*$. A natural norm on $\ca^*$ is given by $\|\omega\|=\sup_{\|A\|=1}|\omega(A)|$.
 
Besides the uniform topology, a normed space can be also equipped with other topologies, most importantly for us with the weak-topology. Of central importance here is how convergence is defined with regards to the weak topology:
\begin{definition}
    Let $\ca$ be a normed vector space and let $\{A_\alpha\}_\alpha$ be a net, where every $A_\alpha\in\ca$. We say that that $\{A_\alpha\}_\alpha$ converges weakly to $A\in\a$ if \begin{equation*}
        \lim_\alpha |\om(A_\alpha)-\omega(A)|=0,
    \end{equation*}
    for every $\omega\in\ca^*$.
\end{definition}
Similarly, the dual space $\ca^*$ can be equipped with the weak*-topology, and weak*-convergence is then defined by:
\begin{definition}
Let $\ca$ be a normed vector space. Let $\{\omega_\alpha\}_\alpha$ be a net, where every $\omega_\alpha\in\ca^*$. We say that $\{\omega_\alpha\}_\alpha$ converges in the weak* sense to $\omega\in\ca^*$ if \begin{equation*}
    \lim_\alpha|\omega_\alpha(A)-\omega(A)|=0,
\end{equation*}
for every $A\in\ca$.
\end{definition}
As usual, continuity of a function can be characterized by conservation of limits of nets (nets are required because $\ca^*$ is not typically weak*-metrizable). For normed spaces $\ca$ and $\cb$ and a weakly convergent net  $A_\alpha\rightarrow A$ in $\ca$,
 $f:\ca\rightarrow\cb$, is weakly-continuous exactly when  $f(A_\alpha)\rightarrow f(A)$ in the weak topology on $\cb$. Weak*-continuity for functions $F:\ca^*\rightarrow\cb^*$ can be characterized in a similar way.

An element $A$ of a $C^*$-algebra $\ca$ is positive iff it can be written as $A=B^*B$ for some $B\in\ca$, and a linear functional $\omega\in\ca^*$ is said to be positive if $\omega(A)\geq0$ for all positive elements of $\a$. If $\ca$ is unital and $\omega$ is positive then the norm can be calculated by $\|\omega\|=\omega(\mathbf{1})$. 
\begin{definition}
    A positive linear functional $\omega\in\ca^*$ is called a state if it is normalized i.e.\ $\|\omega\|=1$.
\end{definition}
Positivity of functionals can be used to define a natural ordering on $\ca^*$. We write $\omega_1\geq \omega_2$, i.e.\ $\omega_1$ majorizes $\omega_2$, if $\omega_1-\omega_2$ is positive.
\begin{definition}
    A state $\omega$ is pure if it only majorizes positive linear functionals of the form $\lambda\omega$, with $0\leq\lambda\leq1$. 
\end{definition}
The set of all states (state space) on $\ca$ is denoted as $\cs_\ca$ and the set of all pure states as $\calp_\ca$.
\begin{theorem}
    The state space $\cs_\ca$ is convex. Furthermore, it is weak*-compact iff $\ca$ is unital. In the latter case, the extreme points of $\cs_\ca$ are given by the pure states, and $\cs_\ca$ is the weak*-closure of the convex hull of $\calp_\ca$.
\end{theorem}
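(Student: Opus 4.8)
The plan is to establish the four assertions in turn, with the Banach--Alaoglu theorem and the Krein--Milman theorem as the main functional-analytic inputs, supplemented by a purely algebraic characterization of purity. Convexity is immediate: if $\omega_1,\omega_2\in\cs_\ca$ and $\lambda\in[0,1]$, then $\lambda\omega_1+(1-\lambda)\omega_2$ is positive (a convex combination of nonnegative numbers on each positive element) and satisfies $(\lambda\omega_1+(1-\lambda)\omega_2)(\mathbf 1)=1$, hence is a state. For the \emph{if} direction of compactness I would assume $\ca$ unital and note that $\cs_\ca$ lies in the closed unit ball of $\ca^*$, which is weak*-compact by Banach--Alaoglu; it then remains to check that $\cs_\ca$ is weak*-closed. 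This follows because the defining conditions $\omega(\mathbf 1)=1$ and $\omega(A^*A)\ge 0$ (for each fixed $A\in\ca$) are preserved under weak*-limits, being conditions on the values of the weak*-continuous evaluation maps $\omega\mapsto\omega(\mathbf 1)$ and $\omega\mapsto\omega(A^*A)$; thus $\cs_\ca$ is an intersection of weak*-closed sets and hence weak*-compact.

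For the \emph{only if} direction I would argue the contrapositive: if $\ca$ is non-unital, then $\cs_\ca$ is not weak*-closed, because the zero functional lies in its weak*-closure while failing to be a state. Concretely, passing to the unitization $\bar\ca=\mathbb{C}\mathbf 1+\ca$, the ``state at infinity'' $\omega_\infty$ defined by $\omega_\infty(\mathbf 1)=1$ and $\omega_\infty|_\ca=0$ restricts to the zero functional on $\ca$ and can be realized as a weak*-limit of states of $\ca$ built from an approximate identity $\{e_\lambda\}$. Since a bounded set that is weak*-compact is necessarily weak*-closed, this shows $\cs_\ca$ is not weak*-compact.

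Next I would prove the algebraic equivalence that $\omega$ is an extreme point of $\cs_\ca$ if and only if it is pure. If $\omega$ is not extreme, write $\omega=\tfrac12(\omega_1+\omega_2)$ with distinct states $\omega_1,\omega_2$; then $\omega$ majorizes $\tfrac12\omega_1$, which is not a multiple $\lambda\omega$, so $\omega$ is not pure. Conversely, if $\omega$ is not pure there is a positive $\rho$ with $\omega\ge\rho\ge 0$ and $\rho$ not of the form $\lambda\omega$; putting $t=\|\rho\|=\rho(\mathbf 1)\in(0,1)$, both $\rho/t$ and $(\omega-\rho)/(1-t)$ are states (using $\|\cdot\|=(\cdot)(\mathbf 1)$ on positive functionals to compute the norms), they are distinct precisely because $\rho\ne t\omega$, and $\omega=t(\rho/t)+(1-t)\big((\omega-\rho)/(1-t)\big)$ exhibits $\omega$ as a nontrivial convex combination, so $\omega$ is not extreme.

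Finally, in the unital case $\cs_\ca$ is convex and weak*-compact, so the Krein--Milman theorem gives $\cs_\ca=\overline{\conv(\mathrm{ext}\,\cs_\ca)}^{\,w*}$; combined with the previous step $\mathrm{ext}\,\cs_\ca=\calp_\ca$, this yields the claimed description of $\cs_\ca$ as the weak*-closure of the convex hull of the pure states. The step I expect to be the main obstacle is the non-unital direction of compactness: showing that the zero functional truly lies in the weak*-closure of $\cs_\ca$ requires the approximate-identity (or unitization) construction together with a short norm estimate, rather than a purely formal closedness argument. The pure-equals-extreme equivalence is the secondary technical point, where the care lies in verifying that the two constructed functionals are genuinely normalized states.
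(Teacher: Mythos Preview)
Your argument is correct and follows the standard textbook route (Banach--Alaoglu for compactness, the algebraic equivalence pure $=$ extreme, and Krein--Milman for the closed convex hull statement). The paper does not actually prove this theorem: it simply cites Bratteli--Robinson, Theorem~2.3.15, for the proof. Your sketch is essentially the argument one finds there.

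One minor point: your convexity argument invokes $(\lambda\omega_1+(1-\lambda)\omega_2)(\mathbf 1)=1$, which presupposes a unit. In the non-unital case convexity still holds, but you need the additivity of the norm on positive functionals, $\|\varphi+\psi\|=\|\varphi\|+\|\psi\|$ (proved via an approximate identity), to conclude that the convex combination has norm~$1$. You already flag the approximate-identity machinery for the non-unital compactness direction, so this is a small addition in the same spirit.
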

For a proof of this theorem, see page 53 below Theorem 2.3.15.\ in~\cite{bratteli1987}.

$C^*$-algebras can be represented as (sub)algebras of the algebra of bounded operators $\cb(\ch)$ on some Hilbert space $\ch$. To make this more precise:
\begin{definition}
    A *-homomorphism between two *-algebras $\ca$ and $\cb$ is a map $\pi:\ca\rightarrow\cb$ such that
    \begin{enumerate}
        \item $\pi(\alpha A+ \beta B)=\alpha\pi(A)+\beta\pi(B)$,
        \item $\pi(AB)=\pi(A)\pi(B)$,
        \item$\pi(A^*)=\pi(A)^*$,
    \end{enumerate}
    for all $A,B\in\ca$ and $\alpha,\beta\in\mathbb{C}$. If a *-homomorphism $\pi$ is bijective, it is called a *-isomorphism.
\end{definition}
\begin{definition}
    A representation of $C$*-algebra A is given by the pair $(\ch,\pi)$, where $\ch$ is a Hilbert space and $\pi:\ca\rightarrow\cb(\ch)$ is a *-homomorphism. If $\pi$ is a *-isomorphism between $\ca$ and the image $\pi[\ca]$, the representation is called faithful.
\end{definition}
As a consequence of the Gelfand–Naimark theorem~\cite{Gelfand1943}, a faithful representation exists for every $C^*$-algebra.

To talk about composite systems, we need the notion of tensor products for $C^*$-algebras. \begin{definition}
Let $\ca$ and $\cb$ be C*-algebras. Their algebraic tensor product is defined by\begin{equation*}
    \a\odot\cb=\left\{\left.\sum_{i=1}^n A_i\otimes B_i\right|A_i\in\ca,B_i\in\cb,n\in\mathbb{N}\right\}.
\end{equation*}
\end{definition}
Although $\ca$ and $\cb$ are $C^*$-algebras, the algebraic tensor product is generally not a $C^*$-algebra. $\ca\odot\cb$ is a *-algebra, but in general it might be possible to define different norms on $\ca\otimes\cb$ satisfying the $C^*$-identity. Furthermore, $\ca\odot\cb$ is not necessarily complete with regard to one of these $C^*$-norms. However, by picking a $C^*$-norm $\|\cdot\|_\alpha$ and completing $\ca\odot\cb$ with respect to it, we obtain the (topological) tensor product $\ca\otimes_\alpha\cb$, which is in fact a $C^*$-algebra itself. Furthermore, a norm $\|\cdot\|_\alpha$ on $\ca\odot\cb$ is called a cross norm if it satisfies $\|A\otimes B\|=\|A\|_\ca\cdot\|B\|_\cb$ for all $A\in\ca$ and $b\in\cb$, where $\|\cdot\|_\ca$ and $\|\cdot\|_\cb$ are the norms on $\ca$ and $\cb$ respectively. It is a well-known fact that every $C^*$-norm on $\ca\odot\cb$ is a cross norm.

In the main text, we will just write $\|\cdot\|$ and $\ca\otimes\cb$ for $\|\cdot\|_\alpha$ and $\ca\otimes_\alpha\cb$, respectively, given that we will fix a particular but arbitrary $C^*$-norm $\|\cdot\|_\alpha$ on $\a\odot\cb$ and consistently work with this choice.

Two well-studied $C^*$-norms are called the minimal and the maximal norm:
\begin{definition}
Let $\ca$ and $\cb$ be C*-algebras and let $(\ch_\ca,\pi_\ca)$ and $(\ch_\cb,\pi_\cb)$ be faithful representations of $\ca$ and $\cb$, respectively. The spatial or minimal norm on $\ca\odot\cb$ is defined as\begin{equation*}
    \left\|\sum_{i=1}^nA_i\otimes B_i\right\|_{\rm{min}}=\left\|\sum_{i=1}^n\pi_{\ca}(A_i)\otimes\pi_{\cb}(B_i)\right\|_{\cb(\ch_\ca\otimes\ch_\cb)}.
\end{equation*}
\end{definition}
It is important to mention that the minimal norm does not depend on the choice of the faithful representations. Note that the minimal norm may be equivalently defined by $\left\|x\right\|_{\rm{min}}=\sup{\left\|\pi_{\ca}\otimes\pi_{\cb}(x)\right\|}$, where the supremum runs now over all representations $\pi_{\ca}$ and $\pi_{\cb}$ of $\ca$ and $\cb$ \cite{takesaki1979}.
\begin{definition}
    The maximal norm on $\ca\odot\cb$ is defined by
    \begin{equation}
    \|C\|_{\mathrm{max}}=\sup\left\{\left.\|C\|_\tau\,|\,\right\|\cdot\|_\tau \mbox{ C*-seminorm on }\ca\odot\cb\right\}.
    \end{equation}
\end{definition}

The $C^*$-algebras obtained by completing $\ca\odot\cb$ with respect to $\|\cdot\|_{\min}$ and $\|\cdot\|_{\max}$ are called the minimal tensor product $\ca\otimes_{\min}\cb$ and the maximal tensor product $\ca\otimes_{\max}\cb$, respectively. 

It turns out that for every $C^*$-norm $\|\cdot\|_\alpha$, it holds that
\begin{equation*}
\|\cdot\|_{\min}\leq\|\cdot\|_\alpha\leq\|\cdot\|_{\max}.
\end{equation*} 
\begin{definition}
    A $C^*$-algebra is called \emph{nuclear} if for every $C^*$-algebra $\cb$, there exists exactly one $C^*$-norm on $\ca\odot\cb$. In other words, $\ca$ is nuclear iff for every $C^*$-algebra $\cb$, the minimal and maximal norm and thus all $C^*$-norms (and all tensor products) coincide.
\end{definition}
Consider two states $\tau$ and $\omega$ on the $C^*$-algebras $\ca$ and $\cb$, respectively. The question arises if  $\tau\otimes\omega$ is a state on $\ca\otimes_\alpha\cb$, and the following proposition answers it in the affirmative.
\begin{proposition}\label{prop:wscs}
    Consider two unital $C^*$-algebras $\ca$ and $\cb$ and states $\tau \in \cs_\ca$ and $\omega \in \cs_\cb$. The linear functional $\tau \otimes \omega$ on
    $\ca \odot \cb$ extends uniquely to a state on $\ca \otimes_{\alpha}\cb$ for any $C^*$-norm $\|\cdot\|_\alpha$.
\end{proposition}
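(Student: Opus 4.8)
The plan is to realize $\tau\otimes\omega$ as a vector functional arising from a tensor product of GNS representations, which delivers positivity and the required norm bound simultaneously, and then to transfer the bound from the minimal norm to an arbitrary $C^*$-norm using the ordering $\|\cdot\|_{\min}\leq\|\cdot\|_\alpha$. First I would note that the prescription $(\tau\otimes\omega)(\sum_i A_i\otimes B_i)=\sum_i\tau(A_i)\,\omega(B_i)$ defines a linear functional on $\ca\odot\cb$; this is the standard fact that a pair of linear functionals induces a well-defined functional on the algebraic tensor product, independent of the chosen representation of an element as a finite sum of simple tensors.

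The key step is the GNS realization. Let $(\ch_\tau,\pi_\tau,\xi_\tau)$ and $(\ch_\omega,\pi_\omega,\xi_\omega)$ be the GNS data of $\tau$ and $\omega$, so that $\tau(A)=\langle\xi_\tau|\pi_\tau(A)\,\xi_\tau\rangle$ and $\omega(B)=\langle\xi_\omega|\pi_\omega(B)\,\xi_\omega\rangle$, with $\|\xi_\tau\|^2=\tau(\mathbf{1}_{\ca})=1=\omega(\mathbf{1}_{\cb})=\|\xi_\omega\|^2$ by unitality. Forming the product representation $\pi_\tau\otimes\pi_\omega$ of $\ca\odot\cb$ on $\ch_\tau\otimes\ch_\omega$ and the unit vector $\Xi:=\xi_\tau\otimes\xi_\omega$, one checks on simple tensors that $(\tau\otimes\omega)(A\otimes B)=\langle\Xi|(\pi_\tau\otimes\pi_\omega)(A\otimes B)\,\Xi\rangle$, so by linearity the vector functional $x\mapsto\langle\Xi|(\pi_\tau\otimes\pi_\omega)(x)\,\Xi\rangle$ agrees with $\tau\otimes\omega$ on all of $\ca\odot\cb$. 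Positivity on $\ca\odot\cb$ is then immediate, since $(\tau\otimes\omega)(z^*z)=\|(\pi_\tau\otimes\pi_\omega)(z)\,\Xi\|^2\geq 0$ for every $z\in\ca\odot\cb$.

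For the bound I would invoke the characterization of the minimal norm recalled in the appendix as a supremum over all representations, which gives $\|(\pi_\tau\otimes\pi_\omega)(x)\|\leq\|x\|_{\min}$. Hence $|(\tau\otimes\omega)(x)|\leq\|(\pi_\tau\otimes\pi_\omega)(x)\|\leq\|x\|_{\min}\leq\|x\|_\alpha$, the last inequality being the standard fact that the minimal norm is dominated by every $C^*$-norm. This is exactly the point that makes the statement hold for an arbitrary $C^*$-norm and not merely the minimal one: boundedness with respect to the \emph{smallest} norm transfers upward to all larger ones. Thus $\tau\otimes\omega$ is $\|\cdot\|_\alpha$-bounded with norm at most $1$, and since $\ca\odot\cb$ is dense in $\ca\otimes_\alpha\cb$, it extends uniquely to a bounded functional $\Phi$ on the completion; uniqueness is just the uniqueness of continuous extension from a dense subspace. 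Finally, normalization $\Phi(\mathbf{1}_{\ca}\otimes\mathbf{1}_{\cb})=\tau(\mathbf{1}_{\ca})\,\omega(\mathbf{1}_{\cb})=1$ together with $\|\Phi\|\leq 1$ forces $\|\Phi\|=\Phi(\mathbf{1})=1$, whence $\Phi$ is positive by the standard characterization of positive functionals on a unital $C^*$-algebra; alternatively one argues positivity directly by writing a positive $p=z^*z$, approximating $z$ by elements of $\ca\odot\cb$, and using continuity of $\Phi$ together with positivity of $\tau\otimes\omega$ on the algebraic tensor product.

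I expect the only genuine obstacle to be handling \emph{every} $C^*$-norm uniformly rather than the minimal norm in isolation. Once the GNS realization is in place it lands, by construction, inside the minimal tensor product, and the inequality $\|\cdot\|_{\min}\leq\|\cdot\|_\alpha$ does the rest; the substantive content is therefore the GNS construction and the recognition that $\pi_\tau\otimes\pi_\omega$ controls $\tau\otimes\omega$ by the minimal norm.
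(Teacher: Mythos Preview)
Your proposal is correct and follows essentially the same route as the paper: realize $\tau\otimes\omega$ as the vector functional associated with $\xi_\tau\otimes\xi_\omega$ in the product GNS representation, bound it by the minimal norm via $\|(\pi_\tau\otimes\pi_\omega)(x)\|\leq\|x\|_{\min}$, and then use $\|\cdot\|_{\min}\leq\|\cdot\|_\alpha$ to get the extension for any $C^*$-norm. Your treatment of positivity on the completion (via $\|\Phi\|=\Phi(\mathbf{1})=1$) is slightly more explicit than the paper's, which simply writes $u=x^*x$ without commenting on why this suffices after completion, but the core argument is identical.
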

\begin{proof}
    Let $\tau \in \cs_\ca$ and $\omega \in \cs_\cb$ as above. We begin by showing that $|\tau\otimes \omega (u)| \leq \|u\|_{\textrm{min}}$ for any $u \in \ca \odot \cb$. 
    Write $(\mathcal{H}_{\tau},\xi_{\tau},\pi_{\tau})$, $(\mathcal{H}_{\omega},\xi_{\omega},\pi_{\omega})$ for their associated GNS representations, and recall that we may define a representation $\pi_{\tau} \otimes\pi_{\omega}$ of  $\ca\odot\cb$ in $B(\mathcal{H}_{\tau} \otimes \mathcal{H}_{\omega})$ through $\pi_{\tau} \otimes\pi_{\omega} (a \otimes b) = \pi_{\tau}(a) \otimes \pi_{\omega}(b)$, extended by linearity. A short calculation reveals that 
    \begin{equation}
        \tau \otimes \omega (u) = \langle \xi_{\tau} \otimes \xi_{\omega}, \pi_{\tau} \otimes \pi_{\omega} (u) \xi_{\tau} \otimes \xi_{\omega} \rangle ~,
    \end{equation}
    and therefore
    \begin{equation}
        |\tau \otimes \omega (u)| \leq \|\pi_{\tau}\otimes \pi_{\omega}(u)\| \leq \textrm{sup}\|\pi_{\tau}\otimes \pi_{\omega}(u)\| = \|u\|_{\textrm{min}}.
    \end{equation}
    Therefore $\tau \otimes \omega$ is bounded with respect to the minimal norm on $\ca \odot \cb$ and thus extends (uniquely) to $\ca \otimes_{\rm min}\cb$. Normalization is immediate, and positivity follows from writing $\tau \otimes \omega (u) = \| \pi_{\tau}\otimes \pi_{\omega}(x)\xi_{\tau}\otimes \xi_{\omega}\|^2$, where we have written $u=x^*x$. Therefore, $\tau \otimes \omega$ extends to a state. Finally, since for any $C^*$-norm $\|\cdot\|_\alpha$ we have $\|u\|_{\min}\leq \|u\|_{\alpha}$, $\tau \otimes \omega$ extends to a continuous linear functional on any $C^*$-tensor product, and positivity and normalization on $\mathcal{A}\otimes_{\alpha}\mathcal{B}$ follows via continuity from positivity and normalization on the algebraic tensor product $\mathcal{A}\odot\mathcal{B}$.
\end{proof}

We may now address the following which is used in the main text.
\begin{proposition}
    Let $\ca,\cb$ be unital $C^*$-algebras and $\ca \otimes_{\alpha}\cb$ be any $C^*$-tensor product. Then for any fixed $\omega \in \cs_\cb$, the map $\ca^* \to (\ca \otimes_{\alpha}\cb)^*$ defined by $\tau \mapsto \tau \otimes \omega$ is weak*-continuous.
    \label{wstartensorpro}
\end{proposition}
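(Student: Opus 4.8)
The plan is to realize the map $F:\tau\mapsto\tau\otimes\omega$ as the adjoint (dual) of a suitable bounded linear map, and then invoke the standard fact that the dual of a bounded map between normed spaces is automatically weak$^*$-continuous (the same fact used in the main text, see Proposition 1.3 in Chapter VI of~\cite{conway2007}). Since the domain of $F$ is $\ca^*$ and its codomain is $(\ca\otimes_\alpha\cb)^*$, it suffices to produce a bounded linear map $R_\omega:\ca\otimes_\alpha\cb\to\ca$ whose dual $R_\omega^*:\ca^*\to(\ca\otimes_\alpha\cb)^*$ coincides with $F$. The natural candidate is the \emph{right slice map} associated with $\omega$.

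First I would define $R_\omega$ on the algebraic tensor product $\ca\odot\cb$ by $R_\omega(a\otimes b)=\omega(b)\,a$, extended linearly; since $(a,b)\mapsto\omega(b)a$ is bilinear this is well defined and lands in $\ca$. The key step is to show $R_\omega$ is bounded with respect to $\|\cdot\|_\alpha$, so that it extends to a bounded map on the completion $\ca\otimes_\alpha\cb$ (still with values in $\ca$, since $\ca$ is complete). For $u\in\ca\odot\cb$ I would use $\|R_\omega(u)\|_\ca=\sup_{\tau\in\ca^*,\,\|\tau\|\le1}|\tau(R_\omega(u))|$ together with the identity $\tau(R_\omega(u))=(\tau\otimes\omega)(u)$, which holds on simple tensors and hence on all of $\ca\odot\cb$. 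It then remains to bound $|(\tau\otimes\omega)(u)|$ for contractive $\tau$. Here I would decompose an arbitrary $\tau$ with $\|\tau\|\le1$ into at most four states: split $\tau=\tau_R+i\tau_I$ into its self-adjoint parts and apply the Jordan decomposition to each, obtaining $\tau=\sum_{k=1}^4 c_k\rho_k$ with states $\rho_k\in\cs_\ca$ and $\sum_k|c_k|\le2$. By Proposition~\ref{prop:wscs}, each $\rho_k\otimes\omega$ is a state on $\ca\otimes_\alpha\cb$, hence a norm-one functional, so $|(\rho_k\otimes\omega)(u)|\le\|u\|_\alpha$; therefore $|(\tau\otimes\omega)(u)|\le2\|u\|_\alpha$ and $\|R_\omega(u)\|_\ca\le2\|u\|_\alpha$.

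Once $R_\omega:\ca\otimes_\alpha\cb\to\ca$ is established as bounded, I would verify that $R_\omega^*=F$: for any $\tau\in\ca^*$ and any simple tensor $a\otimes b$ one has $R_\omega^*(\tau)(a\otimes b)=\tau(R_\omega(a\otimes b))=\omega(b)\tau(a)=(\tau\otimes\omega)(a\otimes b)$, and since $\ca\odot\cb$ is dense in $\ca\otimes_\alpha\cb$ and both functionals are bounded, they agree everywhere. Thus $F=R_\omega^*$ is weak$^*$-continuous. The main obstacle is the boundedness of the slice map for an \emph{arbitrary} $C^*$-norm $\|\cdot\|_\alpha$: the naive estimate $\|R_\omega(u)\|_\ca\le\sup_{\rho\in\cs_\ca}|\rho(R_\omega(u))|$ is false in general, because the supremum over states computes the norm only for normal elements. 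This is precisely why one must run over the full dual unit ball and reduce to states through the four-state decomposition, with Proposition~\ref{prop:wscs} supplying the control by $\|\cdot\|_\alpha$ (ultimately resting on $\|\cdot\|_{\mathrm{min}}\le\|\cdot\|_\alpha$).
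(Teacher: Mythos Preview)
Your argument is correct and takes a genuinely different route from the paper. The paper argues directly with nets: it first checks that $\tau_\lambda\otimes\omega\to\tau\otimes\omega$ pointwise on simple tensors and hence on $\ca\odot\cb$, and then passes to the completion via a three-term triangle inequality, using that each $\tau_\lambda\otimes\omega$ is continuous. You instead realize $F$ as the adjoint of the right slice map $R_\omega$, whose boundedness you obtain from the four-state (Jordan) decomposition combined with Proposition~\ref{prop:wscs}.

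Your approach is cleaner in one respect: once $R_\omega$ is bounded, weak$^*$-continuity of $R_\omega^*$ is a one-line consequence of the general fact you cite, and you get norm-continuity of $F$ (with the explicit bound $\|\tau\otimes\omega\|\le 2\|\tau\|$) for free. It also sidesteps a small delicacy in the paper's $\varepsilon/3$ argument, where the first term $|\tau_\lambda\otimes\omega(x)-\tau_\lambda\otimes\omega(x_\mu)|$ must be controlled \emph{uniformly in $\lambda$}; this is fine because all $\tau_\lambda\otimes\omega$ are states and hence of norm one, but the paper does not make this explicit. Conversely, the paper's argument is more elementary in that it never leaves the state space and does not invoke the Jordan decomposition, which suffices since in the application (Lemma~\ref{TagTrisepToTrisep}) only states are fed into $F$.
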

\begin{proof}
We will write $\Phi_{\omega}(\tau):=\tau \otimes \omega$. Let $\tau_{\lambda} \to \tau$ in the weak*-topology. Then we have
$|(\Phi_{\omega}(\tau_{\lambda})-\Phi_{\omega}(\tau))(a \otimes b)| =  |(\tau_{\lambda} - \tau)(a)|\cdot|\omega(b)|$. This vanishes as $\tau_{\lambda} \to \tau$, and by linear extension we have on $\ca \odot \cb$ that $\Phi_{\omega}(\tau_{\lambda} - \tau) (x) \to 0$. Hence we have shown that
$\tau_{\lambda} \otimes \omega (y) \to \tau \otimes \omega (y)$ for all $y$ in the dense subspace $\ca \odot \cb$. Now to show convergence on the completion $\ca \otimes_{\alpha} \cb$, we observe the following. Suppose $x \in \ca \otimes_{\alpha} \cb$ and that $x_{\mu} \to x$ with $x_{\mu} \in \ca \odot \cb$. Now, 
\begin{align*}
    |(\tau_{\lambda} - \tau)\otimes \omega (x)| \leq  |\tau_{\lambda}\otimes \omega(x) - \tau_{\lambda}\otimes \omega(x_{\mu})|
    +  |\tau_{\lambda}\otimes \omega(x_{\mu}) - \tau\otimes \omega (x_{\mu})|
    +  |\tau\otimes \omega(x_{\mu}) - \tau\otimes \omega(x)|.
\end{align*}
The first term on the right hand side vanishes as $x_{\mu} \to x$ by the continuity of each $\tau_{\lambda}\otimes \omega$, the second as $\tau_{\lambda}\otimes \omega \to \tau \otimes \omega$ in the weak*-topology on $(\ca \odot \cb)^*$ by the first part of this proof, and the third as $x_\mu \to x$ by the continuity of $\tau \otimes \omega$.
\end{proof}

In this work, we consider a tripartite scenario $\mathcal{A}\otimes\mathcal{G}\otimes\mathcal{B}$. Some care is required regarding the order of the tensor products. For example, it is well-known that $C^*$-tensor products are not in general associative, i.e.\ there exist $C^*$-algebras $\mathcal{A},\mathcal{B},\mathcal{C}$ and tensor products $\otimes_\alpha,\otimes_\beta$ such that
\[
   (\mathcal{A}\otimes_\alpha\mathcal{B})\otimes_\beta \mathcal{C}\not \simeq \mathcal{A}\otimes_\alpha(\mathcal{B}\otimes_\beta\mathcal{C}),
\]
even though there are canonical isomorphisms if $\alpha$ and $\beta$ are either both min or both max (see e.g.~\cite{BrownOzawa}).
This seems like an obstacle for our construction: going from one time evolution step to the next as in Fig.~\ref{setup}, for example, amounts to a formal ``rebracketing'' of the tripartite tensor product. However, the following lemma is all that is needed to make sense of our results:

\begin{lemma}
\label{LemAssociativitySortOf}
   Let $\ca$, $\cb$, and $\cc$ be unital $C^*$-algebras, and consider $C^*$-tensor products $\ca \otimes _{\alpha}\cb$ and $(\ca \otimes _{\alpha}\cb)\otimes_{\beta} \cc$. Then, there exists uniquely $C^*$-tensor products  $\otimes_\gamma $ and $ \otimes_\delta$ such that $(\ca \otimes _{\alpha}\cb)\otimes_{\beta} \cc \simeq \ca \otimes _{\delta} (\cb \otimes_{\gamma}\cc)$ under the canonical rebracketing isomorphism $a\otimes (b\otimes c)\mapsto (a\otimes b)\otimes c$.
\end{lemma}
\begin{proof}
Call  $( \ca \otimes _{\alpha}\cb)\otimes_{\beta} \cc:= \cd$ with norm $\|\cdot \|_\cd$. The 
 map $j:\mathcal{B}\odot\mathcal{C}\to\mathcal{D}$, $b \otimes c \mapsto (1 \otimes b) \otimes c$ (extended linearly) is an injective $*$-homomorphism. On $\mathcal{B}\odot\mathcal{C}$, we define the norm $\|x\|_\gamma:= \|j(x) \|_{\cd}$, which makes $j$ isometric.
 That the $C^*$-identity holds follows from direct computation: $\|x^* x\|_{\gamma}=\|j(x^* x)\|_{\mathcal{D}}=\|j(x)^* j(x)\|_{\mathcal{D}}=\|j(x)\|^2_\mathcal{D}=\|x\|_\gamma^2$, where in the penultimate equality we have used the $C^*$-identity in $\cd$.
 This gives us a $C^*$-tensor product $\mathcal{B}\otimes_\gamma \mathcal{C}$. Next let $\mathcal{T}:=\mathcal{A}\odot(\mathcal{B}\odot\mathcal{C})$, and consider the *-homomorphism $r:\mathcal{T}\to\mathcal{D}$, defined by linear extension of $a\otimes (b\otimes c)\mapsto (a\otimes b)\otimes c$. On $\mathcal{T}$, define the norm $\|y\|_\delta:=\|r(y)\|_\mathcal{D}$, and let $\mathcal{E}$ be the completion of $\mathcal{T}$ under this norm. Now, $\mathbf{1}\odot(\mathcal{B}\odot\mathcal{C})$ is a linear subspace of $\mathcal{T}$, and if $x\in\mathcal{B}\odot\mathcal{C}$ then
 \[
    \|\mathbf{1}\otimes x\|_\delta=\|r(\mathbf{1}\otimes x)\|_\mathcal{D}=\|j(x)\|_\mathcal{D}=\|x\|_\gamma.
 \]
 Hence the closure of this subspace is isometrically isomorphic to $\mathcal{B}\otimes_\gamma\mathcal{C}$. In this sense, $\mathcal{A}\odot(\mathcal{B}\otimes_\gamma\mathcal{C})$ is a linear subspace of $\mathcal{E}$, inheriting the norm $\|\cdot\|_\delta$ from $\mathcal{E}$. On $\mathcal{T}$, the norm $\|\cdot\|_\delta$ has the $C^*$-property due to the fact that $r$ is a *-homomorphism; by continuity, this is also true on $\mathcal{A}\odot(\mathcal{B}\otimes_\gamma\mathcal{C})$, and hence $\mathcal{E}\simeq \mathcal{A}\otimes_\delta(\mathcal{B}\otimes_\gamma\mathcal{C})$. The map $r$ can be uniquely isometrically extended to all of $\mathcal{E}$; by continuity, the resulting map will still be a *-homomorphism, establishing the isomorphism between $\mathcal{E}$ and $\mathcal{D}$.
%
%
Uniqueness follows from the fact that 
the algebraic tensor product $\mathcal{A}\odot(\mathcal{B}\odot\mathcal{C})$ is dense in every triple $C^*$ product $\mathcal{A}\otimes_\delta(\mathcal{B}\otimes_\gamma\mathcal{C})$, and demanding that the rebracketing map extends to an isometric isomorphism fixes the norm on this dense subspace uniquely to be the corresponding norm from $(\mathcal{A}\otimes_\alpha\mathcal{B})\otimes_\beta\mathcal{C}$.
\end{proof}

In the main text, we assume that we always deal with local interactions that can be extended to interactions on the total system; that is, we assume that we only deal with completely positive maps $T_{\ca\cg}:\ca\otimes\cg\rightarrow\ca\otimes\cg$ and $\phi_\cb:\cb\to\cb$ such that $T_{\ca\cg}\otimes\phi_B:\ca\otimes\cg\otimes \cb \to \ca\otimes\cg\otimes\cb$ exists and is completely positive. We justify this assumption by the requirement that we only need to consider processes that are physically implementable. One could ask if this assumption is actually needed or whether the combination of two completely positive maps can always be extended to a completely positive map on any tensor product of their underlying $C^*$-algebras. We will see in the following example that the assumption is actually needed.
\begin{example}[Nonexistence of completely positive extension]
\label{ExNonexistence}
Let $\cd$ be a $C^*$-algebra such that $\cd\otimes_{\min} \cd \neq \cd\otimes_{\max}\cd$ (for example, $\cd=\mathcal{B}(\mathcal{H})$ for $\mathcal{H}=\ell^2$~\cite{JungePis}). We consider a sequence $\{u_n\}_{n\in\mathbb{N}}\subset\cd\odot\cd$ s.t. $\|u_n\|_{\min}\to 0$ but $\|u_n\|_{\max}\not\to 0$. Such a sequence exists since $\|\cdot\|_{\min}\leq\|\cdot\|_{\max}$ and we assumed $\cd\otimes_{\min}\cd\neq \cd\otimes_{\max}\cd$, which implies that $\|\cdot\|_{\min}$ and $\|\cdot\|_{\max}$ are inequivalent norms. Furthermore, let $\ca=\cd\oplus\cd$ and $\pi_1:\ca\rightarrow D$ be the projection to the first entry, i.e.\ $\pi_1(a_1,a_2)=a_1$. Let us consider the norm $\|\cdot\|_\alpha$ on $\ca\odot\cd$~\cite{Ozawa2016} defined by
\begin{equation*}
    \|z\|_\alpha=\max\left\{\|z\|_{\min},\|(\pi_1\otimes\mathrm{id})(z)\|_{\max}\right\}.
\end{equation*}
This is in fact a $C^*$-norm, which follows from the following. Since $\pi_1 \otimes \mathrm{id}: \ca \odot \cd \to \cd \odot \cd$ is a $*$-homomorphism, we have $\|z^*z\|_\alpha=\max\left\{\|z\|^2_{\min},\|(\pi_1\otimes\mathrm{id})(z)\|^2_{\max}\right\} = \max\left\{\|z\|_{\min},\|(\pi_1\otimes\mathrm{id})(z)\|_{\max}\right\}^2 = \|z\|_{\alpha}^2$, so that the $C^*$-identity holds for this norm. Now, we consider the map $\Psi:\ca\rightarrow \a$:
\begin{equation}
   \Psi(a_1,a_2)=(a_2,0), 
\end{equation}
which is a *-homomorphism and thus completely positive~\cite{Paulsen2003}. We will see that $\Psi\otimes\mathrm{id}:\ca\odot\cd\rightarrow\ca\odot\cd$ cannot be extended to a completely positive map on $\ca\otimes_\alpha\cd$. Let $\iota_2: \cd\rightarrow\ca$ be the embedding $\iota_2(a)=(0,a)$ and let $x_n:=(\iota_2\otimes\mathrm{id})(u_n)\in\mathcal{A}\odot\mathcal{D}$. We calculate
\begin{equation*}
 (\pi_1\otimes\mathrm{id})(x_n)=((\pi_1\circ \iota_2)\otimes \mathrm{id})(u_n)=0
\end{equation*}
and hence
\begin{equation}
    \|x_n\|_{\alpha}=\max\{\|x_n\|_{\min},0\}=\|x_n\|_{\min}=\|u_n\|_{\min}. \label{normxneqnormun}
\end{equation}
To see that the last equality is true, let $\pi_{\ca}$ be a faithful representation of $\mathcal{A}$. Since $\iota_2$ is an injective *-homomorphism, $\pi_\cd=\pi_{\ca}\circ\iota_2$ is a faithful representation of $\cd$.  Now, let $x=\sum_{k=1}^n (0,a_k)\otimes b_k\in \ca\odot\cd$, $u=\sum_{k=1}^n a_k\otimes b_k\in \cd\odot\cd$ and $\tilde{\pi}$ be a faithful representation of $\cd$, we find $\|x\|_{\min}=\|\sum_{k=1}^n\pi_{\ca}((0,a_k))\otimes\tilde{\pi}(b_k)\|=\|\sum_{k=1}^n\pi_{\ca}(\iota_2(a_k))\otimes\tilde{\pi}(b_k)\|=\|\sum_{k=1}^n\pi_{\cd}(a_k)\otimes\tilde{\pi}(b_k)\|=\|u\|_{\min}$.
From (\ref{normxneqnormun}), we conclude $\|x_n\|_{\alpha}\to 0$. The map $\Psi\otimes\mathrm{id}:\ca\odot\cd\rightarrow\ca\odot\cd$ is linear and under the assumption that $\Psi\otimes \mathrm{id}:\ca\otimes_\alpha\cd\rightarrow \ca\otimes_\alpha\cd$ exists and is completely positive and thus continuous, from $\|x_n\|_\alpha\to 0$ it follows $\|\Psi\otimes\mathrm{id}(x_n)\|_{\alpha}\to 0$. However, we will show that this leads to a contradiction. To this end, \begin{equation*}
    \Psi\circ\iota_2(a)=\Psi(0,a)=(a,0)=\iota_1(a),
\end{equation*}
where we have defined the embedding $\iota_1$ in a similar way as $\iota_2$. Furthermore,
\begin{equation*}
    \Psi\otimes\mathrm{id}(x_n)=\Psi\otimes\mathrm{id}(\iota_2\otimes\mathrm{id})(u_n)=\iota_1\otimes\mathrm{id}(u_n)
\end{equation*}
and
\begin{equation*}
    (\pi_1\otimes\mathrm{id})((\Psi\otimes\mathrm{id})(x_n))=((\pi_1\circ\iota_1)\otimes\mathrm{id})(u_n)=(\mathrm{id}\otimes\mathrm{id})(u_n)=u_n.
\end{equation*}
Thus\begin{equation*}
    \|\Psi\otimes\mathrm{id}(x_n)\|_{\alpha}\geq \| (\pi_1\otimes\mathrm{id})((\Psi\otimes\mathrm{id})(x_n))\|_{\max}=\|u_n\|_{\max},
\end{equation*}
but therefore $\|\Psi\otimes\mathrm{id}(x_n)\|_\alpha\not\to 0$, which is a contradiction.
\end{example}

\section{On LOCC versus classical mediators in the formulation of GME}

The GME proposals in the literature model the classical gravitational field in two distinct manners. In the approach of~\cite{bose2017} the field is modeled as a classical communication channel between the two systems. In the approach of~\cite{marletto2017}, and later~\cite{galley2022} the gravitational field is modeled as a classical system which acts as a mediator, as in the present work. We call this approach the mediator approach, in contrast to the LOCC (local operations and classical communication)~\cite{Horedecki2009,chitambar2014} approach.

In~\cite{marletto2017,marletto2017a}  the classical field is a single bit, whereas in~\cite{galley2022}, the classical field is a discrete classical system (hence with a finite-dimensional space of probability distributions) of arbitrary size. Thus, the present work significantly generalizes the classical mediators used in~\cite{marletto2017,marletto2017a,galley2022} from bits or other discrete classical systems to commutative unital $C^*$-algebras. These algebras can also describe discrete classical systems, but also much more general classical systems such as Hamiltonian mechanics on finite-dimensional phase spaces~\cite{Duvenhage}. They are also expected to describe classical systems on infinite-dimensional phase spaces such as classical field theories, even though this is to the best of our knowledge not spelled out explicitly in the literature.

Furthermore, in contrast to the LOCC approach, our formulation also allows the local quantum systems $A$ and $B$ to be infinite-dimensional, or, in fact, to be described by any unital $C^*$-algebra. Hence, while the LOCC approach treats $A$ and $B$ as quantum-information-theoretic qudits, our results apply to the cases where, for example, $A$ and $B$ are collections of harmonic oscillators or (subsystems of) quantum fields.

Unlike in the mediator approach,  the classical system in the LOCC approach used in~\cite{bose2017} is not modeled dynamically  and hence is not explicitly characterized.  It enters via a communication channel which can be used to transmit information from Alice to Bob, and reciprocally, over a number of rounds. The `interaction' of the classical system with the local quantum systems is modeled via the conditioning of the choice of instrument of one party based on the outcomes of the previous instruments.

The only property of the classical system in the LOCC approach is its dimension (which determines how much classical information can be communicated). In order to properly contrast our contribution based on the mediator approach to the existing literature based on the LOCC approach, we show how to translate the LOCC protocol into a classical mediator protocol, and compare the classical systems used in both approaches. This allows us to compare the dimension of the classical mediator used in the LOCC and mediator approaches.

\subsection{LOCC protocols}

In the following, we outline the definition of two-party LOCC operations. For more details, see e.g.~\cite{chitambar2014}. We note that while there exist generalizations of the definition of LOCC to the case of von Neumann algebras~\cite{VL3} and of $C^*$-algebras~\cite{Verch2005}, we restrict our attention to the definition commonly used in quantum information theory and in the GME proposal of~\cite{bose2017}.

Given two parties $\A$ and $\B$ with associated Hilbert space decomposition $\ch_\A \otimes \ch_\B$, where $\ch_\A$ and $\ch_\B$ are finite-dimensional, a local instrument for $\A$ is of the form $\{M_i^\A \otimes \I_\B\}_{i \in \ci}$, and similarly, a local instrument for Bob is of the form $\{\I_\A \otimes N_i^\B\}_{i \in \ci}$, where $M_i^\A$ and  $N_i^\B$ are sub-normalised completely positive maps and $\I_\A$/$\I_\B$ is the identity channel. In the following, we leave implicit the identity operators on tensor factors, writing for instance $\{M_i^\A\}_{i \in \ci}$ for $\{M_i^\A \otimes \I_\B\}_{i \in \ci}$. The outcome set $\ci$ of the instruments is a countable set, which may be finite or infinite. Without loss of generality it is assumed to be fixed for a given LOCC protocol.  

In the first round of an LOCC protocol, Alice implements an instrument $\{M_i^\A\}_{i \in \ci}$ and communicates the outcome $i$ to Bob, who  implements an instrument $\{N_{j|i}^\B\}_{j \in \cj}$ depending on the outcome $i$. The action of these two instruments for outcome $(i,j)$ on a state $\rho \in \cb(\ch_\A \otimes \ch_\B)$ is:
\begin{align}
    \rho \mapsto (\I_\A \otimes N_{j|i}^\B) [(M_i^\A \otimes \I_\B)\rho)] 
\end{align}
If one coarse grains over Bob's possible outcomes $j$, one obtains the following:
\begin{align}
     \rho \mapsto (\I_\A \otimes T_i^\B) [(M_i^\A \otimes \I_\B)\rho)]
\end{align}
where $T_i^\B = \sum_j N_{j|i}^\B$ is now a channel. The instrument $J_1 = \{(T_i^\B \otimes  M_i^\A)\}_{i\in \ci}$ is a \emph{one way local LOCC} instrument from Alice to Bob, with corresponding channel given by the coarse graining $T_1 = \sum_{i} (M_i^\A \otimes T_i^\B)$.

In the next round, Bob communicates the outcome $j$ of his instrument $\{N_{j|i}^\B\}_{j \in \cj}$ to Alice, who implements her own instrument $\{O_{k|i,j}^\A\}_{k \in \ck}$ conditional on the outcomes $i$ and $j$, giving the following action for outcome $(i,j,k)$:
\begin{align}
    \rho \mapsto  (O_{k|i,j}^\A \otimes \I_\B) (\I_\A \otimes N_{j|i}^\B) [(M_i^\A \otimes \I_\B)\rho)].
\end{align}
If one coarse grains over Alice's outcomes $k$, one obtains for outcome $(i,j)$:
\begin{align}
    \rho \mapsto \sum_k (O_{k|i,j}^\A \otimes \I_\B) (\I_\A \otimes N_{j|i}^\B) [(M_i^\A \otimes \I_\B)\rho)] = (T^\A_{ij}M_i^\A \otimes N_{j|i}^\B)(\rho) .
\end{align}
Coarse graining over the outcomes $i$ gives  $ \{\sum_i T^\A_{ij}M_i^\A \otimes N_{j|i}^\B \}_{j}$ which is a \emph{two-way LOCC instrument}, with corresponding channel given by:
\begin{align}
     \rho\mapsto\sum_{i,j,k} (O_{k|i,j}^\A \otimes \I_\B) (\I\A \otimes N_{j|i}^\B) [(M_i^\A \otimes \I_\B)\rho)] =  \sum_{i,j} (T_{i,j}^\A \otimes \I_\B) (\I\A \otimes N_{j|i}^\B) [(M_i^A \otimes \I_\B)\rho)].
\end{align}
Observe that for every $i$, the instrument $\{T_{i,j}^\A \circ N_{j|i}^\B\}_{j \in \cj}$ is a one-way local LOCC instrument from Bob to Alice. The collection of CP maps $\{(T_{i,j}^\A \circ N_{j|i}^\B \circ M_i^\A)\}_{i \in \ci, j \in \cj}$ can be coarse grained to define an instrument $\{\sum_i T_{i,j}^\A \circ N_{j|i}^\B \circ M_i^\A \}_{j \in \cj}$ which is \emph{LOCC linked} to  $\{(T_i^\B \circ M_i^\A)\}_{i \in \ci}$.

The above consists of two rounds of an LOCC protocol, and generalizes to $n$-round protocols as well as infinite-round protocols by means of a limit procedure described below.

For details of the definition, we refer the reader to~\cite{chitambar2014}. Here, we are interested in the question of state transitions by LOCC channels; hence, it is the channels (and not the instruments) that are our subject of main interest, and what are to be simulated by classical mediators, as discussed in the next subsection.

Observe that while Alice's choice of instrument $\{O_{k|i,j}\}_k$ depends on both $i$ and $j$, Bob needs only transmit the outcome $j$, since it is assumed that Alice can locally store the outcome $i$ from her initial instrument. In the LOCC paradigm agents have free access to local classical systems. This is in contrast with the classical mediator paradigm where all the classical systems have to be included in the mediator.

\subsection{LOCC protocols using a classical mediator (finite number of rounds)}

In the following we establish that finite round LOCC protocols can be implemented by two parties sharing a classical mediator, similarly as used in~\cite{jensen2020} for finite mediators. 
In the following, we denote the classical system $\ell^\infty$ of~\Cref{ex:inf_classical_system} by $\Delta_\infty$ and the finite-dimensional classical system over $n$ points by $\Delta_n$.

\begin{lemma}
    Consider an LOCC protocol between two parties $A$ and $B$, with a classical communication channel of (finite or countably-infinite) dimension $n$ that is used $m$ times. Then the resulting LOCC quantum channel can be implemented in the setup of Figure~\ref{setup} (which shows two steps) with a classical mediator $\Delta_{mn}$ in $m$ steps.
\end{lemma}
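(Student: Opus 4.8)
The plan is to start from an arbitrary finite-round LOCC protocol and construct an explicit sequence of $m$ local mediator interactions of the form permitted in Figure~\ref{setup}, whose composition reproduces the LOCC channel once the mediator is discarded, and then to invoke Corollary~\ref{cor:sequence_states} to certify that the construction lives inside our framework. First I would recall the structure of the finite-round LOCC channel established in the previous subsection: it is a composition of one-way rounds in which the \emph{sender} applies a local instrument, the resulting classical outcome (valued in a set of size $n$) is transmitted, and the \emph{receiver} applies a local operation conditioned on it; coarse-graining over all classical outcomes yields the LOCC \emph{channel} that we must reproduce. The only genuinely nonlocal ingredient is the transmission of these $n$-valued messages across the $m$ rounds, and it is exactly this that the mediator must carry.

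For the mediator I would take $\cg$ to be the commutative unital $C^*$-algebra of $\Delta_{mn}$, coordinatising its $mn$ configuration points as pairs $(r,t)$ with $r\in\{1,\dots,m\}$ a round counter and $t\in\{1,\dots,n\}$ the current message (in the countably-infinite case $mn$ is again countable and $\cg=\Delta_\infty=\ell^\infty$, as in Example~\ref{ex:inf_classical_system}). The $r$-th step channel is taken of the product form $T^{(r)}_{\ca\cg}\otimes\mathrm{id}_\cb$ when the active party in round $r$ is $A$, and $\mathrm{id}_\ca\otimes T^{(r)}_{\cg\cb}$ when it is $B$, matching the two layers of Figure~\ref{setup}, with the parties alternating and the message written at one step being read at the next. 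Each local interaction is a measure-and-prepare map: the active party reads the classical value of the mediator (a sharp measurement in the classical basis, well defined because $\cg$ is commutative), applies the appropriate conditional instrument to its quantum system, records the obtained outcome in a local quantum register, and then re-prepares $\cg$ in the updated state $(r,t')$. Such a map is manifestly unital and completely positive, so each step is a legitimate channel of the required tensor-product form, and one works throughout at the level of these channels and their weak-$*$-continuous duals, exactly as in the main text.

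Correctness is then verified by composing the $m$ steps and discarding the mediator together with the ancillary memory registers at the end, matching the outcome against the coarse-grained LOCC channel, most cleanly by induction on the number of rounds. The main obstacle is the bookkeeping: the mediator carries only a \emph{single} $n$-valued message at any instant, whereas each LOCC operation is conditioned on the \emph{entire} history of outcomes. This is resolved precisely by the observation recorded at the end of the LOCC subsection, namely that each party may retain its own past outcomes and received messages in local memory, so that the full conditioning is recovered from local storage together with the one current message held by $\cg$; this is also why the mediator dimension stays linear in $m$ rather than growing exponentially. The round counter $r$ synchronises the fixed sequence of step channels and guarantees that the receiving party always acts on the freshly written message, contributing the factor $m$ and accounting for the total dimension $mn$. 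The remaining verifications are routine: that the product-form channels extend to the chosen $C^*$-tensor product (using that the commutative mediator is nuclear, as already invoked in the main text), and that discarding the local memory registers returns a channel on $\ca\otimes\cb$ equal to the target LOCC channel. Corollary~\ref{cor:sequence_states} then places the whole construction within the admissible class of mediated operations, completing the argument.
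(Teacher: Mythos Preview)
Your argument is sound but follows a genuinely different route from the paper's. The paper stores the \emph{entire outcome history} in the mediator: it equips $\cg$ with $m$ separate slots $C_{A_1},C_{B_2},\dots$, and at each step the active party writes its outcome into a fresh slot while leaving the earlier ones intact, so that the next party reads the full history directly from $\cg$; for countably infinite $n$ it instead encodes the growing history as a single natural number via a bijection from finite integer sequences to $\mathbb{N}$. You, by contrast, keep the mediator as lean as possible---only the current message plus a round counter---and push the history into local quantum ancilla registers on $A$ and $B$, which are traced out at the end. Your route makes the dimension $mn$ transparent and handles the finite and countably-infinite cases uniformly; the paper's route keeps $\ca,\cb$ unchanged (no local memory enlargement), consonant with its own remark that in the mediator paradigm ``all the classical systems have to be included in the mediator''. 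Two minor points: the round counter in your construction is actually redundant, since the step channels $T^{(r)}$ are already indexed by $r$ and applied in a fixed order, so your mediator could in fact be taken as small as $\Delta_n$; and invoking Corollary~\ref{cor:sequence_states} is unnecessary here, as this lemma asserts nothing about triseparability---the channels are of the admissible product form by construction.
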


\begin{proof}

We first consider the case where the classical communication channel is used to transmit a classical system of finite dimension $n$.

The local quantum systems are $\ch_\A \otimes \ch_\B$ together with a classical system $\C = \C_{\A_1} \C_{\B_2} ... \C_{\A_m}$ for $m$ odd and $\C = \C_{\A_1} \C_{\B_2} ... \C_{\B_m}$ for $m$ even. Each system $\C_{\A_i}$ and $\C_{\B_j}$ is of dimension $n$, hence $\C$ is of dimension $mn$.

Consider the first use of the communication channel from Alice to Bob:
    \begin{align}
     \rho \mapsto \sum_i (\I_\A \otimes N_{j|i}^\B) [(M_i^\A \otimes \I_\B)\rho)]
\end{align}
In the classical mediator scenario, Alice performs the same instrument $\{M_i^\A\}_{i \in \ci}$, and conditional on the outcome prepares the mediator in the state $\ketbra{i}{i}_{\C_{\A_1}}$:
\begin{align}
    \rho \mapsto \rho'= \sum_i (M_i^\A \otimes \I_\B)\rho \otimes \ketbra{i}{i}_{\C_{\A_1}}
\end{align}
This is one application of a classical-quantum channel on the systems $\A\C_{\A_1}$ and hence one  step in the classical mediator setup.

Bob then measures the classical mediator in the canonical basis and implements his instrument $\{N_{j|i}^\B\}_{j \in \cj}$ conditional on the outcome of the measurement. Similarly conditioned on the outcome of his instrument, he prepares the state $\ketbra{j}{j}_{C_{B_2}}$:
\begin{align}
    \rho' \mapsto \rho' \! ' &= \sum_{j,k} \ketbra{j}{j}_{\C_{\B_2}} \otimes (N_{j|k}^\B \otimes E_{\ketbra{k}{k}_{\C_{A_1}}})(\rho') \\
    &=  \sum_{i,j} (\I_\A \otimes N_{j|i}^\B) [(M_i^\A \otimes \I_\B)\rho)] \otimes \ketbra{j}{j}_{\C_{\B_2}} \otimes \ketbra{i}{i}_{\C_{\A_1}}  .
\end{align}
where $E_{\ketbra{k}{k}_{\C_{\A_1}}}(\cdot) = \ketbra{k}{k}_{\C_{\A_1}} (\cdot) \ketbra{k}{k}_{\C_{\A_1}}$.

This consists of an application of a classical quantum channel on the systems $\B\C_{\A_1} \C_{\B_2}$ in the mediator setup. By tracing out the system $\C_{\A_1} \C_{\B_1}$ one recovers the same channel that is implemented by the one-way LOCC protocol. We note that if there are no further rounds Bob does not need to encode his outcome $j$ in the system $\C_{\B_1}$, hence the one-way LOCC channel can be implemented in a single step with the classical system $\C_{\A_1}$ in the mediator setup.

It is now clear how this generalizes to more rounds. For the subsequent use of the communication channel, Alice measures the classical system $\C_{\A_1} \C_{\B_2}$ and implements the instrument $O_{k|i,j}$ depending on the outcome $i,j$. She then encodes the outcome $k$ in a further classical system $\C_{\A_3}$ which she would (in the LOCC case) send to Bob. Hence, each use of the communication channel requires an additional classical system $\Delta_n$ to be used by each agent.

Now consider the case of a countable-infinite classical message (an integer) to be sent during each use of the communication channel in the LOCC protocol. Let $S$ be the set of all finite sequences of integers, then $S$ is countable. Hence there is a bijective map $f:S\to\mathbb{N}$. After $k$ steps of the LOCC protocol, the sequence of previous measurement outcomes $(n_1,\ldots,n_k)\in S$ will be stored in the state $|f(n_1,\ldots,n_k)\rangle\langle f(n_1,\ldots,n_k)|$ of the classical mediator. The next party measures the mediator and, using $f^{-1}$, determines $(n_1,\ldots,n_k)$, performs its operation $O_{n_{k+1}|n_1,\ldots,n_l}$ and reprepares the classical system in state $|f(n_1,\ldots,n_k,n_{k+1})\rangle\langle f(n_1,\ldots,n_k,n_{k+1})|$. This implements the corresponding LOCC channel with a classical mediating system $\Delta_\infty=\ell^\infty$.
\end{proof}

\subsection{Infinite-round LOCC instruments and channels}

It is well-known that there are quantum instruments that are not inplementable exactly by any finite-round LOCC protocol, but that can nonetheless be approximated to arbitrary accuracy by such LOCC protocols. There are two versions of this, called ${\rm LOCC}$ and $\overline{{\rm LOCC}_{\mathbb{N}}}$ in~\cite{chitambar2014}: \textit{``[...] for any instrument in ${\rm LOCC}$, its approximation in finite rounds can be made tighter by just continuing for more rounds within a \emph{fixed} LOCC protocol; whereas for instruments in $\overline{{\rm LOCC}_{\mathbb{N}}}\setminus {\rm LOCC}$, different protocols will be needed for different degrees of approximation.''}

However, all known examples of such instruments are proper instruments in the sense that they have a non-trivial classical outcome set. In the special case of channels (which is our case of interest), it seems currently unknown whether a finite number of LOCC rounds is always sufficient for every channel in $\overline{{\rm LOCC}_{\mathbb{N}}}$ to be implemented perfectly~\cite{Cohen2024}. A positive answer to this would mean that the set of channels in ${\rm LOCC}_{\mathbb{N}}$, the finite-round LOCC channels, is topologically closed.

If it turns out that this set is not closed, the channels in $\overline{{\rm LOCC}_{\mathbb{N}}}\setminus {\rm LOCC}_{\mathbb{N}}$ may or may not be implementable exactly with a classical mediator described by some commutative $C^*$-algebra; but they can certainly be implemented \textit{to arbitrary accuracy} with the classical mediator $\Delta_\infty=\ell^\infty$. In this sense, the classical mediator approach contains the LOCC approach as a special case, with the limiting channels of the former containing the limiting channels of the latter. Hence, the state transitions implementable in the limit of infinitely many steps by the LOCC approach can all be implemented in the limit of infinitely many steps by the classical mediator approach. Moreover, in the main text, we have shown that even the infinite-step limit of the classical mediator setup cannot entangle $A$ and $B$.

\end{document}